\documentclass{amsart}          

\usepackage{a4wide}            
\usepackage{amssymb,latexsym,amsmath,amsthm}  
\usepackage{epsfig}
\usepackage{hyperref}
\usepackage{url}

\setlength{\parindent}{0pt}             


\DeclareMathOperator{\lm}{lm}

\DeclareMathOperator{\NF}{NF}

\DeclareMathOperator{\Mon}{Mon}

\newcommand{\w}{{\omega}}
\newcommand{\R}{{\mathbb{R}}}
\newcommand{\N}{{\mathbb{N}}}
\newcommand{\K}{{K}}
\newcommand{\G}{\mathbb{G}}

\newcommand{\C}{{\mathbb{C}}}
\newcommand{\Q}{{\mathbb{Q}}}
\newcommand{\Z}{{\mathbb{Z}}}

\renewcommand{\d}{\partial}
\newcommand{\tri}{\triangle}

\newcommand{\ARA}{\begin{array}}
\newcommand{\ARE}{\end{array}}
\newcommand{\SMA}{\begin{smallmatrix}}
\newcommand{\SME}{\end{smallmatrix}}
\newcommand{\BMA}{\begin{matrix}}
\newcommand{\BME}{\end{matrix}}

\newtheorem{lemma}{Lemma}[section]
\newtheorem{theorem}[lemma]{Theorem}
\newtheorem{definition}[lemma]{Definition} 
\newtheorem{example}[lemma]{Example}
\newtheorem{remark}[lemma]{Remark} 

\newcommand{\ra}{\rightarrow}

\begin{document}
\thispagestyle{empty}
\title[Symbolic Finite Difference Schemes]{A Symbolic Approach to Generation and Analysis of Finite Difference Schemes of Partial Differential Equations}

\author[V. Levandovskyy]{Viktor Levandovskyy}

\author[B. Martin]{Bernd Martin}

\address[V. Levandovskyy]{Lehrstuhl D f\"ur Mathematik, RWTH Aachen, Templergraben 64, 52062 Aachen, Germany}
\email[V. Levandovskyy]{Viktor.Levandovskyy@math.rwth-aachen.de}

\address[B. Martin]{Lehrstuhl Algebra und Geometrie, TU Cottbus, Postfach 10 13 44, 03013 Cottbus, Germany}
\email[B. Martin]{martin@math.tu-cottbus.de}

\maketitle

\tableofcontents

\begin{abstract}
In this paper we discuss three symbolic approaches for the generation of
a finite difference scheme of a partial differential equation (PDE). We prove,
that for a linear PDE with constant coefficients these three approaches are
equivalent and discuss the applicability of them to nonlinear PDE's
as well as to the case of variable coefficients. Moreover, we systematically
use another symbolic technique, namely the cylindrical algebraic decomposition, in order to derive the conditions on the von Neumann stability of a difference scheme for a linear PDE with constant coefficients. For stable schemes we demonstrate algorithmic and symbolic approach to handle both continuous and discrete dispersion. We present an implementation of tools for generation of schemes, which rely on Gr\"obner basis, in the system \textsc{Singular} and present numerous examples, computed with our implementation. In the stability analysis, 
we use the system \textsc{Mathematica} for cylindrical algebraic decomposition.
\end{abstract}

\section*{Introduction}

The finite difference method for linear PDE's
belongs to the very classical topics in mathematics. However,
its exposition in the classical books like \cite{T} often 
contains mysterious steps, relying on the huge experience,
gathered in last centuries. An algebraist is often confused
with such exposition and asks, whether there is a way to
split the whole picture into purely analytic and algebraic
parts and how is it possible to automatize the process of
scheme generation and further analysis of its properties. 
Hence the ideas to generate finite difference scheme
in an algebraic (or a symbolic) way are folklore, see 
for instance \cite{GV96, FIDE} for approaches and older implementations.

Terminologically, we address a \textit{difference scheme} as symbolic polynomial expression involving unknown function and shift (or difference) operators. We do not attach initial and/or boundary conditions to differential resp. difference 
equations, since the generation of a difference scheme as above is independent on them. As we will see, von Neumann stability can be seen as global
result, always being a necessary condition for stability of a problem with
initial and/or boundary conditions (and sometimes sufficient condition as well). Of course, one uses initial and/or boundary conditions for numerical solving, but the splitting of the whole problem into purely symbolic pre-processing 
and numerical post-processing seems to be the way to address such problems in the future.

In the article \cite{BGM}, Gerdt et al. used for the first time several new
ideas like the use of integral relations in discrete form (especially useful
if one deals with conservation laws), the formulation
of the scheme generation problem as a task for difference elimination
and the systematic use of involutive and Gr\"obner bases. 
Inspired by these ideas, we present our approaches, which will make 
the overall picture of scheme generation and analysis more complete.

The ideas of algebraic analysis suggest the separation of
a problem into analytic and algebraic parts. This allows,
in the case of linear PDE's, to treat systems of equations
via modules over algebras ($D$-module theory, homological algebra etc.).
In such a case there exist many algorithms and several powerful
implementations. Gr\"obner and involutive bases play a fundamental role in
such algorithms, see e.~g. \cite{SST, JFP, CQR}.

In the other direction, the differential algebra (e.~g. \cite{Ritt}) and difference algebra
(e.~g. \cite{CohnRM}) theories allow one to tackle nonlinear equations as well, though the algorithms in these realms are very complicated. In particular,
up to now we do not know any implementation of a basis construction
algorithm for the difference algebra. Notably, a new algorithmic 
approach to (nonlinear) difference equations seem to follow from
the \textit{letterplace} approach \cite{LL09}. However, one needs to
elaborate all details of this promising direction.

The usage of the famous cylindrical algebraic decomposition (CAD) originated from real algebraic geometry to von Neumann stability problems goes back to
\cite{FIDE, HLS}. Since that times more implementations of the CAD 
evloved and their performance has been greatly enhanced.

This paper is organized as follows.
We start with minimal prerequisites and revisit the basic concepts
of scheme generation, paying attention to the algebraic background
including Gr\"obner bases and elimination tools in the Section \ref{s1}.

We discuss three symbolic methods, used in applications
in the Section \ref{s3} and prove their equivalence in the case of a linear
PDE with constant coefficients in the Theorem \ref{MainT}. In cases
of other PDE's only one method will work in general, see Remark \ref{why3}.
As for linear PDE's as in the Theorem, we propose to use module
formulation and Gr\"obner bases for eliminating module components,
which can be seen as a natural generalization of the Gaussian elimination
to matrices over rings. We show the merits of this method, applied
to the classical equations of mathematical physics (heat, wave, advection equations) 
for various approximations. Note, that the method we propose can handle high 
order approximations, which are seldom used in the theory of PDE, but quite often 
in the theory of ODE as high order Runge-Kutta methods.

In the Section \ref{sStability} we present an algebraic and constructive 
formulation of von Neumann stability via ring homomorphism. We
shortly revisit the concepts of cylindrical algebraic decomposition
and connect its use to the questions, arising from difference schemes.

In the Section \ref{sExWave} we consider the $\lambda$-wave equation $u_{tt} = \lambda u_{xx}$ and perform both generation and stability analysis of several difference schemes,
obtained with different approximations. We demonstrate the merits of the semi-factorized
form of a difference scheme, it turns out to be especially useful for higher
dimensional situation.

In the Section \ref{sDispersion} we demonstrate, that the determination of continuous respectively discrete dispersions for a PDE respectively its difference scheme can be algebraized to the large extent as well.

All the examples in this paper have been computed with our implementation
of tools for difference scheme in a freely available computer algebra
system \textsc{Singular} \cite{Singular}. The corresponding library
\texttt{findifs.lib} will be distributed with the next version
of \textsc{Singular}. For the cylindrical algebraic decomposition
we use the commercial system \textsc{Mathematica}; indeed there
are freely available systems like \textsc{QEPCAD} and \textsc{REDLOG},
which are able to do the decomposition as well.

From the viewpoint of applied mathematics there is a general skepticism about the use
of symbolic methods. With this paper we want to stimulate a discussion between scientists of both fields based on a realistic viewpoint.

\section{Algebraization of differential and difference equations}
\label{s1}


\subsection{Types of operator algebras}

First of all we have to fix a computable field $k$, our base field, 
it is mostly the field of rational numbers $\Q$ or complex rational numbers
$\Q[i]$. (Computing with reals or complex numbers is in principle possible,
but only with a fixed precision, 
i.e. with a rational approximation, or one can
compute with algebraic extensions in  
roots of polynomials \-- this being not so interesting for our purpose.)
We can extend the base field by indeterminate constants, i.e. rational functions
in the constants: $\K=k(a,b,c,\ldots)$, which can be specialized 
to special values at any step of our computation if necessary. 
$\K$ is called the field of constants.

We fix a set of variables $x:=(x_1,\ldots , x_n)$ and an 'algebra' of functions
$C=C(x)$ in the $x's$, for instance differentiable functions or functions in discrete (shiftable) arguments. 
$C$ is not our object of computation. 
Instead we consider various operator algebras, consisting of operators, which act on $C$.
There are many operators, which one can handle in this framework, for example
\begin{enumerate}
\item multiplication with a variable: $\ x_i: u(x)\in C\mapsto x_iu(x)\in C$;
\item multiplication with a function 
$f\in C$: $\ m_f:u(x)\in C\mapsto f(x)\cdot u(x)\in C$.
\item partial differentiation: $\ \partial_i: u(x)\in C\mapsto \tfrac{\partial u(x)}{\partial x_i}\in C$;
\item partial shift operators: 
$\ T_i: u(x)\in C\mapsto u(x_1,\ldots ,x_i+1,\ldots ,x_n)\in C$;
\item partial $\lambda$-shift operators: 
$\ T^{\lambda}_{i}: u(x)\in C\mapsto u(x_1,\ldots ,x_i+\lambda,\ldots ,x_n)\in C$, 
clearly $T_i = T^1_i$;
\item partial difference operators: 
$\ \Delta_i = T^{\lambda}_{i} - 1;$ \ $ u(x)\in C \mapsto T^{\lambda}_{i}(u(x)) - u(x) \in C$;
\item $q$-dilation operators: $\ D_d:u(x)\in C
	\mapsto u(x_1,\ldots ,qx_i,\ldots ,x_n)\in C$,
\item et cetera.
\end{enumerate}
Fix a set $S$ of operators, we consider the operator algebra 
$A:=\K\langle S\rangle $ being the
sub algebra of all (linear) operators $Hom_\K(C,C)$ generated by $S$, 
i.e. the smallest 
linear subspace closed under multiplication of operators.
As long as $S$ consists of a finite number of pairwise commuting and independent 
operators the resulting algebra is isomorphic to a polynomial ring: 
$\K[t_1,\ldots , t_m]$.
Otherwise we get (non-commutative) quotient algebra of the free algebra
$\K\langle S\rangle $ by the two-sided ideal of all relations of $S$.

\begin{example}[\textbf{Algebras with constant coefficients}]
\label{ConstCoef}

The algebras of linear partial differential and shift (or difference) operators with 
constant coefficients are commutative $\K$--algebras, isomorphic to 
$\K[x_1,\dots,x_n]$. We denote them by
 $\K[\d_1,\dots,\d_n]$ and $\K[T_1,\dots,T_n]$ respectively.
\end{example}

\begin{example}[\textbf{Algebra with polynomial coefficients}]
\label{PolyCoef}

The algebra of linear partial differential operators with \textbf{polynomial} 
coefficients is the Weyl algebra. It is non--commutative but has simple 
commuting relations. We denote this algebra as $\K\langle x,\partial 
\mid \partial x=x\partial +1 \rangle$, what means that 
this algebra is generated by the $\{x,\partial\}$ over the field $\K$, 
and has the multiplication, defined on its generators: $x \cdot \partial = x 
\partial, \partial  \cdot x = x\partial +1$. The generalization to 
multivariate case is easy, the variable $\{x_i\}$ commutes with the variable 
$\{\d_j\}$ expect for the case $j=i$, then the relation as above applies.

Why the variables satisfy this relation? Consider the multiplication of
two operators, $x:=x_i$ and $\partial:=\frac{\d}{\d x_i}$. Take some
differentiable function $f$ and apply the Leibnitz rule to the product: 
$(\d x)(f) = \d(xf) = x \d(f) + f = (x \d +1 )(f)$. Hence, 
in the operator form $(\d x - x \d -1)(f)=0$.

Consider the algebra of linear $\lambda$-shift operators with \textbf{polynomial} 
coefficients, having in mind $\lambda = \tri x$. Like before, we ca derive the 
relation between operators $T:=T^{\lambda}$ and $x$. 
For any function in discrete argument $f$, $(T x)(f) = T(xf) = T(x)T(f) = (x + \tri x) T(f) = (xT)(f) + \tri x \cdot T(f)$. 
Hence, in the operator form this relation becomes $Tx = xT + \tri x \cdot T$.
The algebra, corresponding to the difference operator $\Delta= T^{\tri x} - 1$
has the relation $\Delta x = x \Delta + \Delta + 1$. These algebras are so called $G$-algebras, in which Gr\"obner basis algorithms exist and are implemented in the system \textsc{Singular:Plural} (\cite{Plural}), see e.~g. \cite{LVdiss}.
\end{example}

\begin{example}[\textbf{Algebra with coefficients in rational functions}]
\label{RatCoef}

Algorithmic computations are possible in the algebras whose coefficient fields are rational functions in $x$: $\K(x)\langle \partial \mid \partial 
x=x\partial +1 \rangle$ and $\K(\tri x,x)\langle T \mid Tx = xT + 
\tri x \cdot T  \rangle $, which are called \textbf{rational} Weyl 
algebra resp. \textbf{rational} shift algebra. 
Algebraically speaking, a passage from polynomial algebra to a rational 
algebra may be achieved by means of localization.
\end{example}
\begin{example}[
\textbf{Differential and Difference Algebra}]

In order to handle non-linear differential resp. difference equations with polynomial nonlinearities, one can consider a full
differential resp. difference 
algebra $\K[\{ O^\beta u \mid \beta\in\N^n\}]$, where
$O$ stands for differential resp. difference operators. Note,
that $O^\beta u$ is a variable, representing $O^\beta(u)$,
where $u=u(x)$ is a dependent variable and $x$ an independent one. 
Note, that these algebras are commutative and their infinite generating sets
are algebraically independent (but dependent differentially resp. in
a difference way).

The given nonlinear equations can be taken as generators of the differential  resp. difference ideal (that is an ideal, closed under the action of corresponding operators) in an above algebra. 

Since such algebras are infinitely generated, they are not Noetherian. 
Gr\"obner basis-like algorithms are therefore not terminating in general.
Nevertheless some parts of the theory from the linear situation 
is extended to this general situation.
\end{example}

In this paper we work algorithmically with linear partial differential operators with constant coefficients. However, in some parts we address and discuss
more general situations as well.

\subsection{Presentation of a system of differential equations}
Any partial differential equation (depending of its kind) defines an element of 
an algebra of corresponding differential operators. A solution of a system of equations
fulfill any equation of the left ideal generated by the equations 
of the system in the algebra. Hence, the solution does not depend on the choice of a basis (that is, a generating set) of the ideal. 
The first possibility of applying symbolic algebra is to compute a better basis
of the ideal, like a {\em Gr\"obner basis} or an {\em involutive basis} (like {\em Janet basis}, 
see \cite{JFP,VPG,WS} - as far as it is possible.
The advantage of thus a pre-processing could be: check the consistency of the system of equations, find hidden constraints or integrability conditions of the system, 
determine the dimension of the solution space etc. 

These data are well known for standard equations from mathematical physics, but
the methods we propose are methodologically applicable to any system of equations.
Let us recall a small example (by W.~Seiler \cite{WS}) as an illustration.
\begin{example}

$$ \left\{\begin{array}{l} u_z+y\,u_x=0 \\
u_y=0\end{array}\right. \Longrightarrow \left\{\begin{array}{l} u_{yz}+y\,u_{xy}+u_x=0 \\
u_{xy}=u_{yz}=0\end{array}\right. \Longrightarrow u_x=0
$$

Hence, the initial system is equivalent to $\{ u_x=u_y=u_z=0 \}$.
\end{example}

In the case of a linear system (S) 
$$S_i=\sum_{j=1}^n D_{ij}\bullet u_j, \ i=1,\ldots m, \;\; D_{ij}\in A$$ 
we associate to (S) the submodule $P=P(S)\subset A^n$ generated by the
columns of a presentation matrix $D\in Mat(m,n;A)$, and, finally, 
a factor-module $M(S):=A^n/P(S)$. We can simplify the system 
finding a special presentation matrix, or we can read properties of 
the system from computable invariants of the module $M(S)$. 

In the example above, the system can be written as
$\begin{pmatrix} \d_z + y\d_x \\ \d_y \end{pmatrix} \bullet (u) = \begin{pmatrix} 0 \\ 0 \end{pmatrix}$. Hence, the system algebra is $A = \K(x,y,z)\langle \d_x,\d_y,\d_z \mid \d_x x = x \d_x +1, \d_y y = y \d_y +1, \d_z z = z \d_z +1 \rangle$ (that is the 3rd Weyl algebra) and the presentation matrix for the system module $M(S)$, written in columns of the original presentation (that is, transposed to the usual raw presentation) is $P(S) = (\d_z + y\d_x, \d_y) \in A^{1\times 2}$. As a submodule of $A$, $P(S)$ is an ideal and it has two polynomial generators $\{\d_z + y\d_x, \d_y\}$. The Gr\"obner basis of $P(S)$ is equal to $Q(S) = \langle \d_x, \d_y, \d_z \rangle$ 
and hence, $M(S) = A/Q(S) \cong \K(x,y,z)$ as $A$-module. Thus, $\dim_{\K(x,y,z)} M(S) = 1$ is the dimension of the solution space of $S$.

\subsection{Gr\"obner basis algorithm and elimination tools}

The notion of Gr\"obner basis can be given in a common way for
different classes of algebras. Recall the basic notation
for monomials and monomial ordering.
We shall use the short notation 
$\d^{\alpha}:=\d_1^{\alpha_1}\d_2^{\alpha_2}\ldots\d_n^{\alpha_n}$,
$\alpha \in \N^n$. 
Finitely generated operator algebras, which we are dealing with, 
have infinite dimension as $\K$-vector spaces. 
The infinite set of monomials constitutes this basis.

\begin{itemize}
\item For operator algebras in operators $\{\d_1,\ldots,\d_n\}$ with constant 
coefficients, the monomials are $\{\d^{\alpha} \mid \alpha \in \N^n\}$, 
they form the basis of an algebra over the field $\K$.

\item In the case where the coefficients are polynomial in $\{x_1,\ldots,x_m\}$, 
the monomials are \\
$\{x^{\alpha}\cdot\d^{\beta} \mid \alpha \in \N^m, \beta\in\N^n\}$ 
and they form the basis of an algebra over $\K$.

\item When the coefficients are rational functions, 
the monomials $\{\d^{\alpha} \mid \alpha \in \N^n\}$ 
constitute the basis of an algebra over $\K(x_1,\ldots,x_m)$.
\end{itemize}

We dealing not only with ideals of an algebra $A$, but also with submodules of a
free module $A^r = \oplus^{r}_{i=1} A e_i$, 
where $e_i$ is the canonical $i$--th basis vector. 
We extend the notion of a monomial to $A^r$ by supplying a monomial 
with one of the basis vectors. Clearly, if  $\Mon(A):=\{m_{\alpha} \}$ 
is the set of monomials of $A$, bijective to $\N^r$, then the monomials of $A^r$ are
$\{m_{\alpha} e_i \mid \alpha \in \N^n, 1\leq i \leq r\}$.

\begin{definition}
A (global) monomial ordering on an algebra $A$ as before is a total ordering $\prec$ on the set of monomials 
$\Mon(A)$ bounded from below and compatible with the multiplication, i.e. it fulfills
the the following conditions for all $\alpha, \beta, \gamma \in \N^n$:
\begin{itemize}
\item $1 \prec m_\alpha$,
\item $ m_\alpha \prec m_\beta \; \Rightarrow \; m_{\alpha}m_\gamma \prec m_{\beta}m_{\gamma}$.
\end{itemize}

Since $\prec$ is total, any nonzero polynomial $f\in A$ can be uniquely sorted according
to its monomials. The highest term (that is monomial times nonzero coefficient)
is called the leading monomial of $f$.
We say, that $m_\alpha \mid m_\beta$ ($m_\alpha$ divides $m_\beta$), if
$\forall 1\leq i \leq n$ $\alpha_i \leq \beta_i$. Note, that the
divisibility is a partial ordering.

\end{definition}
Any monomial ordering is extendable to a module monomial ordering 
in several ways. The most common ways are: either sorting module monomials first by the monomial ordering and then 
by the number of component, or first by component and then by the monomial ordering.
\begin{definition}
Given a monomial ordering $\prec$ of $A$, then monomial orderings $\prec_{top}$ (term-over-position) and $\prec_{pot}$  (position-over-term)
on the set of monomial of $A^r$ are defined by:
$$(m_\alpha,e_i)\prec_{top} (m_\beta,e_j)\ \mbox{iff} \
(m_\alpha \prec m_\beta \ \mbox{or if} \  m_\alpha = m_\beta, \mbox{then} \ i<j),
$$
respectively  
$$(m_\alpha,e_i)\ \prec_{pot}(m_\beta,e_j)\ \mbox{iff} \ (i<j
\ \mbox{or if} \ i=j, \mbox{then} \ m_\alpha \prec m_\beta).
$$ 
\end{definition}
\begin{definition}
A Gr\"obner basis of a submodule $M \subset A^r$ is a finite subset $G\subset M$ that satisfies the following property. 
For any $f\in M\setminus\{0\}$, there exists a element of the basis $g\in G$, such that the leading monomial $lm(f)=m_\alpha e_i$ is divisible by $lm(g)=m_\beta e_i$, i.~e. $\beta \mid \alpha$. 
\end{definition}

An immediate but very important application of Gr\"obner basis is
the {\em normal form} of a vector of polynomials. Namely, if $G=\{g_1,\ldots,g_m \}$ is a
Gr\"obner basis of a submodule $M\subset A^r$, then for any $v\in A^r$
there exist $w, a_i \in A$, such that
\[
v = \sum_{i=1}^m a_i g_i + w, \text{ where either } a_ig_i=0 \text{ or } \lm(v) \preceq \lm(a_i g_i) \text{ and either} w=0 \text{ or } \lm(w) \preceq \lm(v).
\]

One denotes $w = \NF(v, G)$ and calles $w$ to be a normal form of $v$ with respect to $G$. Note, that $v \in M = \langle G \rangle$ 
if and only if $w = \NF(v, G) = 0$.

A Gr\"obner basis $G = \{g_1,\ldots,g_m \}$ is called {\em reduced}, if 
for any $1\leq i\leq m$ and $j\not=i$, no monomial of $g_j$ is
divided by $\lm(g_i)$. Any Gr\"obner basis can be made reduced in
a finite number of steps. Notably, normalized reduced Gr\"obner
basis (that is, having $1$'s as leading coefficients of $g_i$'s)
is unique. It turns out, that normalized reduced normal form is unique, hence we address it by saying {\em the normal form}. \\



There are effective ways to compute a Gr\"obner basis, like the
{\em Buchberger's Algorithm}, {\em involutive algorithm} and {\em Faug{\`e}re's F4} or {\em F5 algorithm}. Gr\"obner bases have been implemented in all major computer algebra
systems. More details for the commutative case can be found meanwhile in any standard textbook on computer algebra, e.~g. in \cite{GPS}. Consult with the \cite{Mgfun,LVdiss} for the non-commutative case of operators with variable coefficients.



It is important to mention, that applied to a module, generated by
the columns of a constant matrix, the result of a 
Gr\"obner basis algorithm (with respect to position-over-term ordering) is identical to the result of Gaussian elimination. \\

Using the special monomial ordering, a Gr\"obner basis algorithm can be used to eliminate some of the variables  $\{ u_i \;|\; i\in I\}$  of a given system, i.e. to compute a basis of $M_I:=M\cap A_I^r$, $A=A_I\langle u_i,i\in I \rangle$.

\begin{lemma}
\label{ElimVar}
{\bf (Elimination of variables)}.
Let $\prec$ be an elimination monomial ordering for $\{ u_i \;|\; i\in I\}$  on $\Mon(A)$ (that is,  $m_\alpha\in A_I$ and $j\not\in I$ implies $m_\alpha\prec u_j$). Let $G$ be a Gr\"obner basis of $M$, then $G\cap A_I$ is a Gr\"obner basis of $M_I$.
\end{lemma}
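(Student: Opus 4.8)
The plan is to prove the two directions required by the definition of a Gröbner basis for $M_I$: first that $G \cap A_I \subseteq M_I$, and then that leading monomials of $G \cap A_I$ divide the leading monomial of every nonzero element of $M_I$. The first inclusion is essentially immediate: each $g \in G \cap A_I$ lies in $M$ (since $G \subseteq M$) and in $A_I^r$ by assumption, so $g \in M \cap A_I^r = M_I$; hence $\langle G \cap A_I \rangle \subseteq M_I$, and in particular $G \cap A_I$ consists of elements of $M_I$ as required.

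For the substantive direction, I would take an arbitrary nonzero $f \in M_I = M \cap A_I^r$ and consider its leading monomial $\lm(f)$. Since $f \in M$ and $G$ is a Gröbner basis of $M$, there is some $g \in G$ with $\lm(g) \mid \lm(f)$. The key point is that, because $f \in A_I^r$, the monomial $\lm(f)$ involves only the variables indexed so as to lie in $A_I$ (i.e. no $u_j$ with $j \notin I$ appears in $\lm(f)$). If $\lm(g)$ contained a variable $u_j$ with $j \notin I$, then by the elimination property $\lm(g) \succeq u_j \succ m_\alpha$ for every monomial $m_\alpha \in A_I$; but divisibility $\lm(g)\mid\lm(f)$ forces every variable occurring in $\lm(g)$ to occur in $\lm(f)$, contradicting that $\lm(f) \in \Mon(A_I^r)$. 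Hence $\lm(g) \in \Mon(A_I^r)$, so the whole polynomial $g$ has all its monomials in $A_I^r$ — this is where one uses that $\prec$ is a global (well-)ordering with $\lm(g)$ the $\prec$-largest monomial of $g$ and that every monomial $\succ$-below a monomial of $A_I^r$ is again in $A_I^r$ (more precisely, any monomial of $g$ other than the leading one is $\prec \lm(g) \prec u_j$, so cannot involve $u_j$). Therefore $g \in G \cap A_I$, and it witnesses $\lm(g) \mid \lm(f)$, which is exactly what is needed for $G \cap A_I$ to be a Gröbner basis of $M_I$.

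The step I expect to be the main obstacle — or at least the one requiring the most care — is the passage from ``$\lm(g)$ involves only $A_I$-variables'' to ``$g$ itself lies in $A_I^r$''. One must rule out that a \emph{lower} term of $g$ secretly involves an eliminated variable $u_j$; this uses the defining inequality of an elimination ordering ($m_\alpha \prec u_j$ for all $m_\alpha \in A_I$, extended compatibly with multiplication) together with the fact that any monomial containing $u_j$ is $\succeq u_j$, hence strictly larger than $\lm(g)$ if $\lm(g) \in \Mon(A_I^r)$ — impossible for a non-leading monomial of $g$. In the non-commutative settings of Examples \ref{PolyCoef} and \ref{RatCoef} one should additionally remark that the commutation relations only introduce terms of lower or equal order in the eliminated variables, so the argument is unaffected; but since this paper works with constant-coefficient (commutative) operator algebras in the relevant theorem, the commutative argument suffices and the subtlety is minor. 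I would also note explicitly that finiteness of $G \cap A_I$ is inherited from finiteness of $G$, so $G \cap A_I$ is a bona fide (finite) Gröbner basis.
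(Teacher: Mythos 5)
Your proof is correct and complete. The paper itself omits the proof of this lemma, stating only that it is easy and deferring to \cite{GPS, LVdiss}; your argument --- that divisibility $\lm(g)\mid\lm(f)$ forces $\lm(g)\in\Mon(A_I^r)$, and that the elimination property together with globality of $\prec$ then forces every lower term of $g$ into $A_I^r$ as well --- is precisely the standard argument found in those references, including the one genuinely delicate step (passing from the leading monomial to the whole polynomial) which you correctly identify and justify.
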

Note, that a lexicographical ordering of monomials by $u_1>u_2>\ldots >u_n$
induces an elimination monomial ordering for any set $\{u_i,\ldots ,u_n\}$.

We can also eliminate module components, and usually it is much easier than the elimination of variables.

\begin{lemma}
\label{ElimComp}
{\bf (Elimination of components)}.
Let $G$ be a Gr\"obner basis of a submodule $M\subset A^r$ with respect to
the module monomial ordering $\prec_{pot}$, let 
$F_s:=Ae_1\oplus \dots \oplus Ae_s \subset A^r$
be the free submodule of the first $s$ components, then $G\cap F_s$ is a Gr\"obner basis of $M\cap F_s$.
\end{lemma}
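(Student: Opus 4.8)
The plan is to mimic the standard proof of the variable-elimination lemma (Lemma \ref{ElimVar}), exploiting that $\prec_{pot}$ orders monomials first by their component index. First I would observe that $G \cap F_s \subset M \cap F_s$ is automatic, so the only thing to check is the Gr\"obner basis property: for every nonzero $f \in M \cap F_s$ there is some $g \in G \cap F_s$ with $\lm(g) \mid \lm(f)$. Take such an $f$. Since $f \in M$ and $G$ is a Gr\"obner basis of $M$, there is $g \in G$ with $\lm(g) = m_\beta e_i \mid \lm(f) = m_\alpha e_j$; divisibility of module monomials forces $i = j$, and since $f \in F_s$ we have $j \le s$, hence $i \le s$.

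The key step is then to argue that this $g$ actually lies in $F_s$, i.e. that \emph{every} monomial appearing in $g$ has component index $\le s$. This is where the choice of $\prec_{pot}$ is essential: by definition of position-over-term, if $\lm(g) = m_\beta e_i$ then every other monomial $m_\gamma e_k$ of $g$ satisfies $m_\gamma e_k \prec_{pot} m_\beta e_i$, which by the definition of $\prec_{pot}$ means $k < i$ or ($k = i$ and $m_\gamma \prec m_\beta$); in either case $k \le i \le s$. Therefore $g \in F_s$, so $g \in G \cap F_s$, and it provides the required divisor of $\lm(f)$. This establishes that $G \cap F_s$ is a Gr\"obner basis of $M \cap F_s$.

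I do not expect a serious obstacle here; the statement is essentially a formal consequence of the definitions, and the only subtlety is making precise the claim that $\prec_{pot}$ never lets a higher-index component appear below a lower-index leading monomial. One should also note for completeness that $G \cap F_s$ is finite (being a subset of the finite set $G$), so it is a legitimate Gr\"obner basis in the sense of the definition given above. If one wants, one can remark that the same argument shows $G \cap F_s$ generates $M \cap F_s$ as a module, via normal-form reduction against $G \cap F_s$, but this already follows from the Gr\"obner basis property together with the characterization $v \in \langle G \cap F_s\rangle \iff \NF(v, G \cap F_s) = 0$.
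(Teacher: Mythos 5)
Your proof is correct and is exactly the standard argument: the paper itself omits the proof, referring to \cite{GPS, LVdiss}, and the argument given there is the same one you use, namely that under $\prec_{pot}$ every monomial of $g$ lies in a component $\leq$ that of $\lm(g)$, so any $g\in G$ whose leading monomial divides $\lm(f)$ for $f\in M\cap F_s$ must itself lie in $F_s$. No gaps; your closing remarks on finiteness and on generation via normal forms are consistent with the paper's definition of a Gr\"obner basis.
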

The proofs of both lemmata on elimination are easy and can be found in e.~g. \cite{GPS, LVdiss} for various situations.

\begin{remark}
We want to stress the fact, that the algorithm we propose, using the operator formulation of the problem, will always be faster, than the algorithm in difference algebra, which is used by Gerdt et al. in \cite{BGM}. The difference in complexity lies in the number of variables/components one uses and the intrinsic differences between two similar-looking elimination concepts. 
Computing in the case, when the functions and discretizations of their derivatives $u,u_t,u_{xx}$ etc. appear as variables (difference algebra approach), one must distinguish between the two multiplications, namely the action one (denoted by $\bullet$), appearing when difference operators act on $u$'s, and the operator one (denoted by $\cdot$), used for composition of difference operators themself. 
One either uses the involutive basis approach with its division of all variables into multiplicative and non--multiplicative, or forbids the multiplications between $u$'s in the Gr\"obner basis algorithm. In addition, one has to employ a complicated elimination ordering, which respects the special role of $u$'s.

\indent
By passing to the submodule over a ring of difference operators, we do not use $u$'s at all.
We use the linearity of operators in order to present the equations as linear operators themselves, presented by polynomials in difference operators with constant coefficients, involving parameters. Thus, we use less variables, simple and efficient module component elimination orderings. Moreover, we shift the attention of  Gr\"obner basis algorithm from single polynomials rather to their components, which results in easier and faster computation, not speaking on much optimized usage of memory. 
\end{remark}

\section{Three Equivalent Approaches and the Main Theorem}
\label{s3}

Assume we are dealing with $m$ spatial variables $x_1,\ldots,x_m$ and one temporal variable $t=x_{m+1}$. We denote $x^\alpha := x_1^{\alpha_1} \cdots x_m^{\alpha_m} t^{\alpha_{m+1}}$
for $\alpha \in \N^{m+1}$ and $|\alpha | = \sum \alpha_i$. 
Then we use notations
\[
u_{x^\alpha} = u_{\alpha} := 
\frac{\partial^{|\alpha |}u}{\partial x^\alpha} = 
\frac{\partial^{|\alpha |}u}{\prod \partial x_i^{\alpha_i}  }
\]

A single linear PDE with constant coefficients can be written as follows
\begin{equation}
\label{EqCont}
P \equiv \quad \sum _{\beta \in B} c_{\beta} u_{x^\beta} = 0, 
\end{equation}
where $B \subset \N^{m+1}$ is a finite set and $c_{\beta} \in \K$.

In order to compute approximations, 
we rewrite the equation in an arbitrary inferior point 
$\bar{\iota} = (i_1,\ldots,i_m,n) \in \mathbb{G} := \Z \triangle x_1 \times \cdots \times \Z \triangle x_n \times \Z \triangle t$ (where $G$ is often identified with $\Z^{m+1}$) of the grid, that is

\begin{equation}
\label{EqDiscr}
\sum _{\beta \in B} c_{\beta} u^{\bar{\iota}}_{x^\beta} = 0.
\end{equation}

On the grid, one needs to give an approximation to a function
$u^{\bar{\iota}}_{x^\beta}$ through a linear combination of
similar expressions, coming from differentials of order, lower than $\beta$,
 that is $\forall \bar{\iota}\in \mathbb{G}, \forall \beta \in B\setminus\{0\}$

\begin{equation}
\label{AppDiscr}
A_\beta \equiv \quad
u^{\bar{\iota}}_{x^\beta} = \sum_{\gamma \in \Gamma, \bar{\jmath} \in \mathbb{G}} d_{\gamma,\bar{\jmath}} u^{\bar{\jmath}}_{x^\gamma} \text{ with } \mid \gamma \mid < \mid \beta \mid, 
d_{\gamma,\bar{\jmath}} \in \K.
\end{equation}
Note, that all but finite number of $d_{\gamma,\bar{\jmath}}$ are zero. \\

We say, that a general problem of approximation of a partial differential equation is \textbf{well-defined}, if for any $\beta \in B\setminus \{0\}$ 
(that is for any $u_{x^\beta}$, appearing in the equation with non-zero coefficient except for $u$ itself) there is a unique approximation from $A_\beta$.\\

Indeed, on the grid $\G$ we have natural shift operators $T_{x_i}: v^{\bar{\iota}} \mapsto v^{\bar{\iota} + \epsilon_i}$, where $\epsilon_i$ is the $i$-th canonical basis vector. That is $T_{x_i}(v(\ldots,\iota_i,\ldots)) = v(\ldots,\iota_i + \triangle x_i,\ldots))$ is a well-known forward shift operator. 
Clearly $T_{x_i}$ is invertible operator, since the inverse is just a backward shift operator. Thus, working with $T_{x_i}$ we allow exponents to be integers.
For an exponent vector $\alpha \in \Z^{m+1}$, denote 
$T^{\alpha} = T_{x_1}^{\alpha_1} \cdot \ldots \cdot T_{x_m}^{\alpha_m} T_{t}^{\alpha_{m+1}}$.
In what follows we will use the field of fractions $K(T):=K(T_{x_1},\ldots,T_{x_m},T_{t})$ of the polynomial ring $K[T]:=K[T_{x_1},\ldots,T_{x_m},T_{t}]$.

Thus, it is possible to use shift operators and rewrite the previous equation in a single generic point $\iota$ of the grid, as the following lemma shows.

\begin{lemma}
In the notations as before, there exist exponent vectors $\delta(\bar{\iota}), \delta(\bar{\jmath})$, $\jmath\in G$ such that
\begin{equation}
\label{AppDiscrShift}
T^{\delta(\bar{\iota})} (u_{x^\beta})^{\bar{\iota}} = 
\sum_{\gamma \in \Gamma, \bar{\jmath} \in \mathbb{G}} d_{\gamma,\bar{\jmath}} 
T^{\delta(\bar{\jmath})} (u_{x^\gamma} )^{\bar{\iota}}
\text{ or }
(u_{x^\beta})^{\bar{\iota}} = 
\sum_{\gamma \in \Gamma, \bar{\jmath} \in \mathbb{G}} d_{\gamma,\bar{\jmath}} 
T^{\delta(\bar{\jmath})-\delta(\bar{\iota})} (u_{x^\gamma} )^{\bar{\iota}}.
\end{equation}
\end{lemma}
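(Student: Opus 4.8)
The plan is to exploit that the grid $\G$ carries a free transitive action of the translation group $\Z^{m+1}$ (under the customary identification $\G\cong\Z^{m+1}$), and that this action is implemented on grid functions by the invertible shift operators $T_{x_i}$. Granting this, the lemma is essentially bookkeeping with exponent vectors; the only ingredient that is not purely formal is that the shifts appearing on the right-hand side can be made non-negative, so that the operators produced lie in the polynomial ring $\K[T]$ and not merely in $\K[T,T^{-1}]$.

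First I would record the elementary rule
\[
T^{\alpha}(v)^{\bar{p}}=v^{\bar{p}+\alpha}\qquad(\alpha\in\Z^{m+1}),
\]
valid for any grid function $v$ and any grid point $\bar{p}$; it is immediate from $T_{x_i}\colon v^{\bar\iota}\mapsto v^{\bar\iota+\epsilon_i}$ and the invertibility of each $T_{x_i}$, which is exactly what permits negative entries in $\alpha$. Reading $A_\beta$ from \eqref{AppDiscr} through this rule, its left-hand side is $(u_{x^\beta})^{\bar\iota}$ and each right-hand term $d_{\gamma,\bar\jmath}\,u^{\bar\jmath}_{x^\gamma}$ becomes $d_{\gamma,\bar\jmath}\,T^{\bar\jmath-\bar\iota}(u_{x^\gamma})^{\bar\iota}$; thus $A_\beta$ is literally the second identity of \eqref{AppDiscrShift} with $\delta(\bar\jmath)-\delta(\bar\iota)=\bar\jmath-\bar\iota$ — observe that only this difference enters there, so the individual vectors $\delta(\bar\iota),\delta(\bar\jmath)$ are still free.

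To obtain the first identity I would fix them using finiteness. By hypothesis only finitely many $d_{\gamma,\bar\jmath}$ are nonzero, so the set $S$ of grid points $\bar\jmath$ occurring in $A_\beta$ is finite; choose $\delta(\bar\iota)\in\N^{m+1}$ componentwise $\geq\bar\iota-\bar\jmath$ for all $\bar\jmath\in S$, and set $\delta(\bar\jmath):=\delta(\bar\iota)+(\bar\jmath-\bar\iota)$, which then lies in $\N^{m+1}$. Applying the invertible operator $T^{\delta(\bar\iota)}$ to both sides of the identity just obtained — equivalently, evaluating $A_\beta$ at the grid point $\bar\iota+\delta(\bar\iota)$, where it again holds, and using $\bar\jmath+\delta(\bar\iota)=\bar\iota+\delta(\bar\jmath)$ — and using the rule of the first step once more turns it into $T^{\delta(\bar\iota)}(u_{x^\beta})^{\bar\iota}=\sum_{\gamma,\bar\jmath}d_{\gamma,\bar\jmath}\,T^{\delta(\bar\jmath)}(u_{x^\gamma})^{\bar\iota}$, which is the first identity of \eqref{AppDiscrShift}; multiplying back by $T^{-\delta(\bar\iota)}$ returns the second. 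All manipulations remain inside $\K[T]\subset\K[T,T^{-1}]\subset\K(T)$, and finiteness of $S$ ensures the right-hand sides are genuine (Laurent) polynomials.

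I do not foresee a genuine obstacle; the only point that deserves a sentence is the legitimacy of the choice of $\delta(\bar\iota)$, which exists precisely because the stencil $S$ is finite, and which is exactly the global shift needed to push every right-hand-side operator into $\K[T]$ rather than only into $\K[T,T^{-1}]$ — the form wanted for the module and Gr\"obner-basis constructions that follow.
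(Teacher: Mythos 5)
Your argument is correct and is essentially the paper's proof: both hinge on the finiteness of the stencil to pick a global shift $\delta(\bar\iota)$ large enough that all exponent vectors $\delta(\bar\jmath)=\delta(\bar\iota)+(\bar\jmath-\bar\iota)$ land in $\N^{m+1}$, the paper simply making the minimal such choice by setting $\bar\kappa$ to the componentwise minimum of all occurring grid points and $\delta(\bar\iota)=\bar\iota-\bar\kappa$, $\delta(\bar\jmath)=\bar\jmath-\bar\kappa$. Your write-up is somewhat more explicit about why only the difference $\delta(\bar\jmath)-\delta(\bar\iota)=\bar\jmath-\bar\iota$ is forced and about the passage between the two displayed identities via the invertible operator $T^{\delta(\bar\iota)}$, but the underlying idea is identical.
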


\begin{proof}
For $1\leq k \leq m+1$ set $\kappa_k := \min \{ \jmath_k, \iota_k \mid \jmath\in\G, \gamma \in \Gamma, d_{\gamma,\bar{\jmath}}\not=0 \}$. Then, setting $\delta (\bar{\iota}) := \bar{\iota} - \bar{\kappa}$ and
respectively $\delta (\bar{\jmath}) := \bar{\jmath} - \bar{\kappa}$ we obtain two exponent vectors for the monomials in shift operators.
\end{proof}

According to the lemma, in what follows we will derive and encode approximations for 
functions on the grid with the help of shift operators. This allows us to drop the grid
point notation as soon as shift operators are present. In other words, we use shift operators to formulate the problem in a generic point of the grid.

There are several approaches to the computation of a finite difference scheme of a single partial differential equation.

\subsection{Mimicking Difference Algebra Approach}
\label{subsDA}

Consider the formal consequences of equalities $P$ as in (\ref{EqCont}) and $A_\beta$ as in (\ref{AppDiscr}) over the commutative ring $R_B:=K(T)[u_{x^{\beta}} \mid \beta \in B]$. Recall, that the variables $\{u_{x^{\beta}} \}$ are algebraically independent. In other words, we consider an ideal $I$ of the ring $R_B$, generated by $P \cup \{A_\beta \mid \beta \in B\}$. Since $B$ is finite, the ring $R_B$ is Noetherian and contains a subring $R:=K(T)[u]$. Hence the ideal
$J := I \cap K(T)[u]$ exists and it is computable (e.~g. by the elimination of all but one variables as in Lemma \ref{ElimVar}). Since $R$ is a 
principal ideal domain, $J$ is generated by a single element, say $p \in K(T)[u]$. Clearing denominators, we obtain a polynomial expression $\tilde{p} \in K[T][u]$. Dividing its leading coefficient out, we obtain the unique result.

Note, that we do not work with \textit{difference ideal} in a difference ring, but
just mimic a technique for presenting the objects.

\subsection{Algebraic Analysis Approach}
\label{subsAA}

Let us order the set $\{u_{x^{\beta}} \mid \beta \in B\}$ according to the monomial ordering (hence by the total degree as well), starting, say, from the highest appearing $\beta_{max}$, and write the resulting ordered list as a column vector $U$. Then it has the form $U =[u_{x^{\beta_{max}}},\ldots,u]^T$. Since $P$ and $A_\beta$ are linear equations with coefficients in $K[T]$ in the entries of $U$, we put each equation as a row in a matrix, say $M$, with entries in $K[T]$, such that $M \bullet U = 0$, where $\bullet$ stands for the action of shift operators with coefficients in $K$ on functions in discrete arguments. Then we can perform operations from the left on the matrix $M$ only, without engaging the $u$'s. This can be seen as the application of one of the principles of algebraic analysis. We compute then the intersection of $K[T]$-module $M$ with the free submodule, generated by the last component of $U$. The latter is an ideal $J\subset K[T]$ of all polynomials $p$ in shift operators, such that $p \bullet u = 0$. 

\subsection{Term Rewriting System Approach}
\label{subsRS}

Consider the equations from (\ref{AppDiscrShift}) in the monic form, that is
\[
u_{x^\beta} = 
\sum_{\gamma \in \Gamma, \bar{\jmath} \in \mathbb{G}} d_{\gamma,\bar{\jmath}} 
T^{\delta(\bar{\jmath})-\delta(\bar{\iota})} u_{x^\gamma}.
\]

Let us treat them as rewriting rules for symbols $\{ u_{x^\beta} \mid \beta \in B\setminus\{0\} \}$, which rewrites every appearance of 
$u_{x^\beta}$ with the sum on the right hand side. We call such system $S$.
Since the latter always involves $u_{x^\gamma}$ if and only if $\gamma \prec \beta$, we do the following. 
At first, we order occurring variables as in the approach \ref{subsAA}, that is descending with respect to the monomial ordering $\{u_{x^{\beta_{max}}},\ldots,u\}$. Then, in the same sequence, the variable $u_{x^{\varepsilon}}$ is substituted (replaced) with the right hand side of the corresponding approximation $A_\varepsilon$. According to the ordering, the result of the substitution does not contain variables, which are higher, according to the monomial ordering. Hence the result of the substitution is unique and consists of shift operators, acting on $u$ only. 

\begin{theorem}
\label{MainT}
For a single linear partial differential equation with constant coefficients,
the following three methods lead to the same result and hence they are equivalent:
\begin{enumerate}
\item As in \ref{subsDA}, $I \cap K(T)[u] = \langle f \rangle$, where $f$ is chosen to be monic polynomial from $K[T][u]$;
\item As in \ref{subsAA}, for $r=|B|$ holds $K[T]^r \supset M \cap K[T]e_r = J \subset K[T]$. Moreover, $J = \langle g \rangle$, where $g\in K[T]$ is chosen to be monic;
\item As in \ref{subsRS}, $P \ra_S h$ and $h$ is normalized in addition.
\end{enumerate}
\end{theorem}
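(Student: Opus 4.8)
The strategy is to show that all three constructions compute, in essence, the same elimination: we start from the same linear relations $P$ and $\{A_\beta\}$ (equivalently the shifted forms $A_\beta$ of \eqref{AppDiscrShift}) and eliminate every $u_{x^\beta}$ with $\beta \neq 0$, retaining only the polynomial relation on $u$. Concretely, I would set up the matrix $M$ of approach \ref{subsAA}, whose rows are $P$ and the $A_\beta$ written in the ordered variables $U = [u_{x^{\beta_{\max}}},\ldots,u]^T$, and take this as the common object of comparison. The plan is then: (i) identify approach \ref{subsDA} with a Gröbner elimination of the ideal generated by the rows of $M$ in $K(T)[u_{x^\beta}\mid\beta\in B]$; (ii) identify that elimination with the module-component elimination of approach \ref{subsAA} via Lemma \ref{ElimComp}; (iii) identify approach \ref{subsRS} with Gaussian elimination applied to $M$, using the remark that a Gröbner basis with respect to $\prec_{pot}$ of a module generated by columns of a matrix over a field coincides with Gaussian elimination.

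\textbf{Step 1: \ref{subsDA} $=$ \ref{subsAA}.} Since $P$ and each $A_\beta$ are \emph{linear} forms in the variables $\{u_{x^\beta}\}$ with coefficients in $K(T)$ (indeed in $K[T]$ after clearing denominators), the ideal $I \subset R_B = K(T)[u_{x^\beta}\mid\beta\in B]$ is generated by linear polynomials, and $I \cap K(T)[u]$ is obtained by eliminating the variables $u_{x^\beta}$, $\beta\neq 0$. For an ideal generated by linear forms, elimination of variables is exactly linear algebra over $K(T)$: $I\cap K(T)[u]$ is generated by the $K(T)$-linear combinations of $P,\{A_\beta\}$ that lie in $K(T)[u]$, i.e. exactly the content of the last component of the row span of $M$ after left-multiplication by matrices over $K(T)$ — which is precisely $M\cap K(T)e_r$ in the module picture. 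Clearing denominators and dividing out the leading coefficient turns the $K(T)$-generator into the monic generator $g\in K[T][u]\cong K[T]$, so the outputs agree. This uses that $K(T)[u]$ is a PID so the intersection ideal is principal, already observed in \ref{subsDA}; over the module $K[T]^r$ one needs Lemma \ref{ElimComp} with $\prec_{pot}$ to justify that $G\cap K[T]e_r$ generates $M\cap K[T]e_r$, but then passing to $K(T)$ makes the two intersection ideals coincide up to the denominator/normalization bookkeeping.

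\textbf{Step 2: \ref{subsRS} $=$ \ref{subsAA}.} The rewriting system $S$ replaces $u_{x^\beta}$ (for $\beta\neq0$, processed in descending monomial order) by the right-hand side of $A_\beta$, which by construction involves only $u_{x^\gamma}$ with $\gamma\prec\beta$. This is exactly forward substitution / Gaussian elimination on the rows of $M$: eliminating the first column (top variable), then the next, and so on, is well-defined because each $A_\beta$ is monic in its leading variable $u_{x^\beta}$, so no division or branching occurs and the order of processing dictated by the monomial order makes the result independent of choices — this is the termination-and-confluence content one must check. Starting the rewriting from $P$ and running it to completion yields a relation in $u$ alone, namely the last-component entry of the row-reduced $M$; normalizing $h$ (making it monic) gives the same element as in \ref{subsAA}. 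The cited remark that Gröbner w.r.t.\ $\prec_{pot}$ over a field equals Gaussian elimination is what formally links this reduction to the module computation of \ref{subsAA}.

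\textbf{Main obstacle.} The delicate point is not any single identification but the bookkeeping that makes "the result is unique" precise in each model and shows these uniquenesses coincide: in \ref{subsDA} one must know $J = I\cap K(T)[u]$ is principal and that clearing denominators plus normalizing is canonical; in \ref{subsAA} one must invoke Lemma \ref{ElimComp} and then argue that the $K[T]$-module intersection, after inverting $K[T]$, gives the same ideal (i.e. no spurious content is lost or introduced); in \ref{subsRS} one must prove the rewriting terminates and is confluent, so that $P\to_S h$ has a well-defined normal form, and that this $h$ matches the eliminant. I expect the confluence/termination argument for \ref{subsRS} — hinging on the strict monomial-degree drop $|\gamma|<|\beta|$ in \eqref{AppDiscr} together with monicity of the $A_\beta$ — to be the part requiring the most care, though it is ultimately routine; everything else is linear algebra over $K(T)$ packaged through the elimination lemmas.
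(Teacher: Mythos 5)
Your proposal is correct and follows essentially the same route as the paper: all three constructions are recognized as computing the unique normal form of $P$ against the triangular system $\{A_\beta\}$, with (a)$\leftrightarrow$(b) being linear algebra/component elimination over $K(T)$ resp. $K[T]$ and (c) being the same reduction read as substitution, the uniqueness resting on the fact that $\{A_\beta\}$ is already a Gr\"obner basis (coprime leading monomials / upper-triangular matrix) together with the normalization. The confluence and denominator-clearing issues you flag as the delicate points are exactly the ones the paper's proof handles, in its case by passing to the fully substituted system $S'$ and invoking uniqueness of the normalized reduced normal form.
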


\begin{proof}
a) $A_\beta$ is already a Gr\"obner basis in $K(T)[u_{x^\beta}]$
by the product criterion, because the leading monomials of its elements are coprime (since the set $\{u_{x^\beta} \mid \beta \in B\}$ is algebraically
independent. Since $\NF(P,A_\beta) \in K(T)[u]$, we obtain that $\{\NF(P,A_\beta)\} \cup A_\beta$ is a Gr\"obner basis of $P\cup A_\beta$. The uniqueness follows from the uniqueness of normalized reduced normal form \cite{GPS}. \\

b) Proceeding with the vector as above and starting with higher leading monomials, the matrix representation $M$ of the set $A_\beta$ is already 
upper triangular (with its entries in $K[T]$). Moreover, the last row has 
at exactly two nonzero elements (say, the last two ones in that row). 
Thus $M$ is already in a row-reduced form. Note, that making complete reduction
of the rows will produce a matrix $M'$, where each row contains exactly
two nonzero elements: 
$(0,\ldots,0,f_i(T),0,\ldots,0,f_r(T))$ with $f_i,f_r \in K[T]$.
Hence $M'$ simplifies the set of approximations and correspond to
the completely reduced Gr\"obner basis. 

Now, we append to $M$ (or $M'$, what is equivalent) the row $P'$, corresponding to the discretized equation $P$. The computation of the Gaussian elimination of the resulting matrix amounts in reductions of $P'$ with the rows of row-reduced $M'$. The result of such reductions is a row vector with the only nonzero entry $f(T)\in K[T]$ at it last position. Since it is the same as the reduced normal form $\NF(P',M')$ and $M'$ is a Gr\"obner basis, the ideal $J$ of the Theorem is principal ideal, generated by $f(T)$. Then the normalized $f(T)$ is single operator, acting on $u$, that is the difference scheme. \\

c) Note, that the application of rewriting rules in the sequence, as described above, leads to the $\NF(P,A_\beta)$. Since by a) $A_\beta$ is a Gr\"obner basis with respect to any monomial well-ordering, this normal form is unique.
Fixing a monomial ordering, we can produce another rewriting system $S'$ by applying rewriting rules to every right hand side of $S$, starting ascendingly from the smallest nonzero $\beta \in B$. Then $S'$ becomes $\{ (u_{x^{\beta}} \to f_\beta(T) u) \mid \beta\in B\setminus\{0\}\}$, where $f_\beta(T)\in K[T^{\pm 1}]\subset K(T)$. It is straightforward, that $S'$ does not depend on the sequence of reductions like $S$ anymore and $S'$ is confluent. The reduction of $P$ with respect to $S'$ is the same as with respect to $S$, hence it is unique.
\end{proof}

\begin{remark}
\label{why3}
Note, that the equivalences of the previous Theorem do not hold in general.
Namely, Algebraic Analysis Approach \ref{subsAA} works only for linear
PDE, since it relies on the module structure, which is linear per definition.
Moreover, both a) and b) do not necessarily work in the case of
variable coefficients. In the latter, there are different concepts
for discretization. As soon as one deals with algebras, where $x$ and $T_x$ do not commute anymore, the left normal form of a vector (the computation uses subtractions of left multiples of an approximation and thus invokes non-commutative multiplication) 
is not necessarily the result of rewriting of any term $u_{x^\beta}$ (which just plugs the right hand side expression into the place where the term resides, just not
invoking non-commutative multiplication). \\

Indeed the only method, which works for all cases of systems, is the
Term Rewriting System Approach \ref{subsRS} ($A_\beta$ is a Gr\"obner
basis for arbitrary linear approximations).
\end{remark}


\section{Generation of difference schemes}


Armed with the methods from the previous section, we proceed with
the generation of schemes for linear equation with constant coefficients.
We prefer the method of algebraic analysis from \ref{subsAA}, in contrast to Gerdt et al. \cite{BGM}, who used the method of difference algebra because
of the reasons of practical complexity, which is significantly lower
in the approach \ref{subsAA}. However, Gerdt et al. systematically follow
the difference algebra approach for nonlinear equations, where in \cite{BGM} they have obtained an interesting nice-behaving scheme of the order 3 for the the original equation of order 2, which however does not contain switches as usual schemes.

A big class of PDE's from mathematical physics are 
of conservative type in space-time with coordinates $(t,x)\in \R^4$:
$$\partial P/\partial t - \partial Q/\partial x=0.$$
By Gauss formulae this equation can be transformed into its conservation law form:
$$ \int_{\partial V}(Qdt+Pdx)=0.$$
We choose some discretized integration contours and approximations rules for 
the integrals and proceed as above. The difference schemes, which we obtain by 
elimination are by construction fully consistent \cite{BGM} .

\subsection{Approximation Rules and their Operator Form} 

The most general way to approximate a PDE is to use the integral
 relations like $\int^{t_{n+1}}_{t_n} u_{tt}(x,t) dt = u_t(x,t_{n+1}) -u_t(x,t_n)$ together with further approximations of differentials and integrals. \\

However, a large class of equations might be written in a so--called \textit{conservation law form}, which can be obtained e.g. by applying the Green's formula. For example, the equation $\frac{\d Q}{\d x} - \frac{\d P}{\d y} = 0$ is equivalent to the equation $\oint\limits_{\Gamma} P dx + Q dy = 0$ for arbitrary piecewise smooth closed contour $\Gamma$.

\textbf{Contour approximations.} There are many possibilities for choosing contours and
approximations. Since we are using rather rectangular than quadratic grids, the two most used approximation on contours are the node points of the rectangle and the middle points of
a grid with a double distance, as illustrated by the pictures below.

\begin{figure}[htbp]
\unitlength=0.8mm
\special{em:linewidth 0.4pt}
\linethickness{0.5pt}
\begin{picture}(80.00,61.00)(-18, 0)
\put(21.00,10.00){\vector(1,0){30.00}}
\put(51.00,10.00){\vector(0,1){40.00}}
\put(51.00,50.00){\vector(-1,0){30.00}}
\put(21.00,50.00){\vector(0,-1){40.00}}
\put(21.00,30.00){\line(1,0){30.00}}
\put(36.00,10.00){\line(0,1){40.00}}
\put(36.00,50.00){\circle*{2.00}}
\put(51.00,30.00){\circle*{2.00}}
\put(21.00,30.00){\circle*{2.00}}
\put(36.00,10.00){\circle*{2.00}}
\put(14.00,10.00){\makebox(0,0)[cc]{$k-1$}}
\put(16.00,30.00){\makebox(0,0)[cc]{$k$}}
\put(14.00,50.00){\makebox(0,0)[cc]{$k+1$}}
\put(21.00,5.00){\makebox(0,0)[cc]{$j-1$}}
\put(36.00,5.00){\makebox(0,0)[cc]{$j$}}
\put(51.00,5.00){\makebox(0,0)[cc]{$j+1$}}
\end{picture}
\begin{picture}(80.00,61.00)(-18, 0)
\put(21.00,10.00){\vector(1,0){15.00}}
\put(36.00,10.00){\vector(1,0){15.00}}
\put(51.00,10.00){\vector(0,1){40.00}}
\put(51.00,50.00){\vector(-1,0){15.00}}
\put(36.00,50.00){\vector(-1,0){15.00}}
\put(21.00,50.00){\vector(0,-1){40.00}}
\put(36.00,10.00){\line(0,1){40.00}}
\put(36.00,50.00){\circle*{2.00}}
\put(36.00,10.00){\circle*{2.00}}

\put(21.00,10.00){\circle*{2.00}}
\put(51.00,10.00){\circle*{2.00}}
\put(51.00,50.00){\circle*{2.00}}
\put(21.00,50.00){\circle*{2.00}}

\put(14.00,10.00){\makebox(0,0)[cc]{$k-1$}}
\put(16.00,30.00){\makebox(0,0)[cc]{$k$}}
\put(14.00,50.00){\makebox(0,0)[cc]{$k+1$}}
\put(21.00,5.00){\makebox(0,0)[cc]{$j-1$}}
\put(36.00,5.00){\makebox(0,0)[cc]{$j$}}
\put(51.00,5.00){\makebox(0,0)[cc]{$j+1$}}
\end{picture}
\end{figure}

Although by applying the Green's formula we lower the order
of an equation by 1, the approximation formulas, derived from the contour are usually more complicated, than the approximations, derived from the
original equation and integral relations. Note, that this is not a problem for an implementation, since complicated manipulations with polynomial expressions can be performed effectively with modern computer algebra systems.

\textbf{Approximation of differentials via Taylor series.} Applying the Taylor expansion up to the 2nd order, we obtain $u(x \pm \tri x) = u(x) \pm \tri x u_x(x) + \frac{\tri x^2}{2} u_{xx}(x) + \mathcal{O}(\tri x^3)$. Hence,

$u_x(x) = \dfrac{u(x+\tri x) - u(x)}{\tri x} + \mathcal{O}(\tri x)$ (forward difference)

or $u_x(x) = \dfrac{u(x)-u(x-\tri x)}{\tri x} + \mathcal{O}(\tri x)$ (backward difference).

Subtracting these two equalities we get 
$u(x+\tri x) - u(x) + u(x)-u(x-\tri x) = 2\tri x u_x(x) + \mathcal{O}(\tri x^3)$,
hence $u_x(x) = \frac{u(x+\tri x) - u(x-\tri x)}{2 \tri x} + \mathcal{O}(\tri x^2)$ (\textit{central $1^{st}$ order difference}).

Adding these two equalities and rewriting the result, we get the \textit{central $2^{nd}$ order difference}

$\dfrac{u(x+\tri x) - 2 u(x) + u(x-\tri x)}{\tri x^2} = u_{xx}(x) + \mathcal{O}(\tri x^2 )$.\\

\textbf{Approximation of integrals.}
Closed Newton-Cotes formulas give rise to e.g. trapezoid an pyramid rules, whereas open Newton-Cotes formulas lead us to e.g. midpoint rule.\\

The trapezoid rule is expressed as follows:

\[
\int \limits_{x_0}^{x_0 + \tri x} f(x) dx =  \frac{1}{2} \tri x (f(x_0)+f(x_0 + \tri x))-\frac{1}{12 \tri x^3}f^{''}(\xi), \: x_0 \leq \xi \leq x_0 + \tri x.
\]

Hence, in the approximation we obtain, for $f(x)=u_x(x)$:
\[
u(x_0 + \tri x) - u(x_0) = \int \limits_{x_0}^{x_0 + \tri x} u_x(x) dx = 
\frac{1}{2} \tri x (u_x(x_0)+u_x(x_0 + \tri x)),
\]

and hence $(T_x -1)\bullet u = \frac{1}{2} \tri x (T_x+1) \bullet u_x$.\\

Pyramid (or Simpson's) rule looks as follows:
\[
\int \limits_{x_0}^{x_0 + 2\tri x} f(x) dx  = \frac{1}{3} \tri x (f(x_0)+4f(x_0 + \tri x)+f(x_0 + 2\tri x))-\frac{1}{90 \tri x^5} f^{(4)}(\xi),
\]

hence its difference form is $\frac{1}{3} \triangle x \cdot (T^2_x+4T_x+1) \bullet u_x = (T^2_x-1) \bullet u$. \\

Open Newton-Cotes formula for 1 point,
\[
\int \limits_{x_0}^{x_0 + 2\tri x} f(x)dx = 2\tri x f(x_0 + \tri x) + \mathcal{O} (\tri x^2 f^{'}), 
\]
 leads us to the midpoint formula $\triangle x \cdot T_x \bullet u_x = (T^2_x-1) \bullet u$. \\

\textbf{Summary.} In a form of difference operators, we gather the most used approximations 
\begin{itemize}
\item \textbf{Forward difference} $(\triangle x, \; 1-T_x) \bullet (u_x, u)^T = 0$
\item  \textbf{Backward difference} $(\triangle x \cdot T_x, \; 1-T_x) \bullet (u_x, u)^T = 0$
\item  A $1^{st}$ \textbf{order central appr.} $(2\triangle x \cdot T_x, \; 1-T_x^2) \bullet (u_x, u)^T = 0$
\item A $2^{nd}$ \textbf{order central appr.} $(- \triangle x^2 \cdot T_x, \; (1-T_x)^2) \bullet (u_{xx}, u)^T = 0$ 
\item \textbf{Trapezoid rule} $(\frac{1}{2} \triangle x \cdot (T_x+1), 1-T_x)\bullet (u_x,u)^T = 0$
\item \textbf{Midpoint rule} $(2 \triangle x \cdot T_x, 1 - T^2_x) \bullet (u_x,u)^T = 0$.
\item \textbf{Pyramid rule} $(\frac{1}{3} \triangle x \cdot (T^2_x+4T_x+1), 1-T^2_x) \bullet (u_x,u)^T = 0 $
\item \textbf{Lax method}\footnote{used in e.g. discretization of the advection equation} $(2 \tri t \cdot T_x, T^2_x - 2T_tT_x +1)\bullet (u_t, u)^T = 0$
\item \textbf{Parametric temporal difference} for $0\leq \theta \leq 1$: 
$(\tri t \cdot(\theta T_t +(1-\theta)), 1- T_t) \cdot (u_t, u)^T = 0$. If $\theta=0$ resp. $\theta=1$, it becomes forward resp. backward difference.
\end{itemize}

Assume that the difference scheme involves quantities $\tri x_1,\ldots,\tri x_m, \tri t$ and originates from a typical set of approximations. 
By definition, difference scheme is of the smallest difference order, hence
the shift polynomial $p$, describing it, is irreducible.
It turns out, that in many situations we want to present $p$ as the sum
of products of operators. Taking into account the future applications
to the von Neumann stability, we propose the following.

\begin{definition}
A \textbf{semi-factorized} presentation of a linear difference scheme of order $O(\tri x_1^{b_1}, \ldots, \tri x_m^{b_m},\tri t^c)$ to be the sum 
$p = \tri x_1^{b_1} p_1 + \ldots + \tri x_m^{b_m} p_m + \tri t^c p_t$ for $p_i \in K[T]$, such that $p_i$ does not involve $\tri x_i$ in its coefficients and
most (if not all) $p_j$ do not involve $\tri x_1,\ldots,x_m, \tri t$.
\end{definition}

Unlike nodal form, a semi-factorized form allows compact descriptions of very complicated and higher dimensional schemes. Note, that in the examples we present, it turns out that there exists a unique (up to constant factors) semi-factorized presentation. In the implementation we have a method for computing a
semi-factorized form constructively.

\begin{example}
\label{Heat}
Consider the 1D heat equation $u_t - a^2 u_{xx} = 0$ for a parameter $a$.\\

We approximate $u_t$ with backwards difference 
$\triangle t \cdot T_t \bullet u_t = (T_t-1)\bullet u$,
or, in the nodes of the grid,
$\triangle t \cdot (u_t)^{m+1}_i = (u)^{m+1}_i - (u)^{m}_i$. \\

As for $u_{xx}$, it is approximated with the 2nd order weighted centered space method, that is
\[
\triangle x^2 \cdot T_x \bullet u_{xx} = (\theta T_t + (1-\theta))\cdot (T_x - 1)^2 \bullet u,
\text{ where } 0\leq \theta \leq 1.
\]

In such a way, we obtain a matrix formulation of the problem
\[
\begin{pmatrix}
  1 & -a^2 & 0 \\
  - \triangle t \cdot T_t  & 0 & T_t-1 \\
  0 & -\triangle x^2 T_x T_t & (\theta T_t + (1-\theta)) \cdot (Tx-1)^2
\end{pmatrix}
\bullet
\begin{pmatrix}
u_t \\ 
u_{xx}\\ 
u
\end{pmatrix}
=0.
\]

By computing a Gr\"obner basis (with the algebraic analysis approach), 
we obtain a single polynomial in shift operators for the scheme
\[
-a^{2} \tri t \theta T_x^{2}T_t+ a^{2} \tri t (\theta-1) T_x^{2}+(2a^{2} \tri t \theta+\tri x^{2}) T_x T_t-(2a^{2} \tri t (\theta-1) + \tri x^{2})T_x-a^{2} \tri t \theta T_t+a^{2} \tri t( \theta-1)
\]

Its semi-factorized form is $\tri x^2 T_x (T_t-1)  - a^2 \tri t (T_x-1)^2 (\theta T_t+ 1-\theta) = 0$. In the following example we show \textsc{Singular} code for obtaining these objects and for producing a nodal presentation of the scheme, which is
\[
\frac{1}{a^{2} \tri t}\cdot (u^{n+1}_{j+1}-u^{n}_{j+1}) - \frac{ \theta}{ \tri x^{2}}\cdot (u^{n+1}_{j+2} -2 u^{n+1}_{j+1}+u^{n+1}_{j}) - \frac{(1-\theta)}{ \tri x ^{2}}(u^{n}_{j+2}-2 u^{n}_{j+1}+u^{n}_{j}) = 0.
\]

The scheme we obtained is called FTCS if $\theta=0$, BTCS if $\theta=1$ and Crank-Nicholson, if $\theta=\frac{1}{2}$. \\

If is easy to see, that this scheme is consistent with the original differential equation for any $\theta\in \R$. Namely, since $\frac{u^{n+1}_{j+1}-u^{n}_{j+1}}{\tri t} = u_t + \mathcal{O}(\tri t)$ and $\frac{u^{n+1}_{j+2} -2 u^{n+1}_{j+1}+u^{n+1}_{j}}{ \tri x^{2}} = u_{xx} + \mathcal{O}(\tri x^2)$, we have 
\[
\frac{1}{a^{2}} u_t - \theta  u_{xx} - (1-\theta)  u_{xx} = \frac{1}{a^{2}} u_t - u_{xx} = \mathcal{O}(\tri t) + \mathcal{O}(\tri x^2).
\]

Thus, the order of the scheme is $(\tri t, \tri x^2)$.
\end{example}

\begin{example}
\label{HeatSing}

In this example we do computations with \textsc{Singular} and \texttt{findifs.lib}. As we see from the matrix formulation, our \textit{parameters} are $ \tri t, \tri x, a, \theta$, to which we add the parameter $d$, which will be needed later for the check of stability.
The \textit{variables} of our ring are $T_t$ and $T_x$. 
We define the ring in \textsc{Singular} and the matrix of equations as follows:
\begin{verbatim}
ring r = (0,a,dx,dt,theta,d),(Tx,Tt),(c,Dp); 
matrix M[3][3]=
1, -a^2, 0,                                   // the equation itself
-dt*Tt, 0, Tt-1,                              // appr. u_t with backward difference
0, -dx^2*Tt*Tx,(theta*Tt+(1-theta))*(Tx-1)^2; // appr. u_xx with theta-centered space 
\end{verbatim}

Now we transpose the module and then call the \texttt{std} routine for getting the Gr\"obner basis.

\begin{verbatim}
module R = module(transpose(M)); module S = std(R);
print(S);
==> 0,     0,           1,     
    0,     (-a^2*dt)*Tt,(-a^2),
    S[3,1],Tt-1,        0  
\end{verbatim}

As we can see, the first column vector is the only one with the values in the 3rd component only. \texttt{S[3,1]} is displayed since this polynomial (which describes the difference scheme) is big.

\begin{verbatim}
poly p = S[3,1]; p; // assign and print the answer
==>(-a^2*dt*theta)*Tx^2*Tt+(a^2*dt*theta-a^2*dt)*Tx^2+(2*a^2*dt*theta+dx^2)*Tx*Tt+
   (-2*a^2*dt*theta+2*a^2*dt-dx^2)*Tx+(-a^2*dt*theta)*Tt+(a^2*dt*theta-a^2*dt)
\end{verbatim}
We proceed with the semi-factorized form and visualization.
\begin{verbatim}
LIB "findifs.lib"; // load the library for schemes
ideal I = decoef(p,dt); // auxiliary procedure
I;  // the sum of elements of I gives p
==>I[1]=(dx^2)*Tx*Tt+(-dx^2)*Tx
   I[2]=(-a^2*dt*theta)*Tx^2*Tt+(a^2*dt*theta-a^2*dt)*Tx^2+(2*a^2*dt*theta)*Tx*Tt+
        (-2*a^2*dt*theta+2*a^2*dt)*Tx+(-a^2*dt*theta)*Tt+(a^2*dt*theta-a^2*dt)
\end{verbatim}

From this structure, we can obtain the semi-factorized operator form of the scheme:

\begin{verbatim}
factorize(I[1]); // we suppress the output
factorize(I[2]); // factors with multiplicities
==>[1]:
     _[1]=(-a^2*dt)
     _[2]=Tx-1
     _[3]=(theta)*Tt+(-theta+1)
  [2]:
     1,2,1
\end{verbatim}

Hence, the semi-factorized form is $\tri x^2 T_x (T_t-1)  - a^2 \tri t (T_x-1)^2 (\theta T_t+ 1-\theta) = 0$.

\begin{verbatim}
list L; L[1] = theta;
difpoly2tex(I,L); // we show though only a part of this string
==>\frac{-1}{a^{2} \tri t}\cdot (u^{n+1}_{j+1}-u^{n}_{j+1})+ ...
\end{verbatim}


The string above (in tex format) is the nodal presentation of the scheme,
which has already been demonstrated in the previous example.


\end{example}


\section{Symbolic methods for von Neumann stability analysis}
\label{sStability}

\subsection{Stability rings, morphisms and polynomials}
We refer the reader to e.~g. \cite{T} for details about stability.
Suppose that $t$ is the temporal variable and $x_1,\ldots,x_m$ are
the spatial variables. We start with a finite difference scheme, written in 
the nodal form on the uniform orthogonal grid with steps $\tri t, \tri x_1, \ldots,
\tri x_m$. We suppose to work in the interior region,  which is bounded, say,
by $L_1,\ldots,L_m$ in spatial directions. \\

In the von Neumann stability analysis, one presents the functions on the grid as discrete Fourier modes, that is
\[
\chi(u^n_{j_{1} j_{2} \dots j_{m}}) = g^n \prod^{m}_{k=1} e^{i j_k l_k \pi \tri x_{k}},
\]

where $\chi$ is a linear map, $g$ is a new symbolic variable, $0\leq \l_k \tri x_{k} \leq L_k$. We abbreviate $\beta_{j_k}:= \pi l_k \tri x_{k}$. Then, one substitutes this presentation of
nodes into the equation, performs simplifications and obtains a polynomial $G$ in one variable $g$ with constant coefficients. \\

The \textbf{von Neumann stability criterion} (as in \cite{T}) states, that if for every root $\xi$ of $G$, one has $|\xi| \leq 1$, then the difference scheme is stable.\\

The \textbf{Lax equivalence theorem} can be stated in the following form (adopted from \cite{T}). A consistent scheme for a well-posed linear initial value problem is convergent if and only if it is stable. For a well-posed linear initial-boundary-value problem, however, stability is only a necessary condition for convergence in general. \\

We do not address algorithms for algorithmic check of consistency of a difference scheme with its differential equation in this paper. There are several methods for doing it using algebraic tools like \cite{GV96, GR10}. However, for several equations we treat we show the usage of semi-factorized form of a scheme to
the positive conclusion about such consistency.

Let $A$ be the algebra of functions on a given grid. It is naturally a module over the
algebra $R$ of linear partial difference operators with constant coefficients $C[T_t,T_{x_1},\ldots,T_{x_m}]$ over some field $C\supseteq \Q(\tri t, \tri x_1,\ldots, \tri x_m)$.
The action of $R$ on discrete Fourier nodes, using the map $\chi$, can be written
as follows: 
\[
\chi(T^a_t \bullet u^n_{j_{1} j_{2} \dots j_{m}}) = g^a \cdot \chi(u^n_{j_{1} j_{2} \dots j_{m}}) \text{ and } \chi(T^b_{j_s} \bullet u^n_{j_{1} j_{2} \dots j_{m}}) = e^{i j_{s} l_{s} \pi \tri x_{k}} \cdot \chi(u^n_{j_{1} j_{2} \dots j_{m}}) \text{ for all } j_k.
\] 
The map $\chi$ and this action give rise to the constructive homomorphism of $C=\Q(\tri t, \tri x_1,\ldots, \tri x_m)$-algebras
\[
\chi: C[T_t,T_{x_1},\ldots,T_{x_m}] \longrightarrow C\bigl ([i,sin_{x_1},cos_{x_1},\ldots,sin_{x_m},cos_{x_m}]/J_m \bigr ) [g],
\]

where $J_m = \langle i^2+1,sin^2_{x_1}+cos^2_{x_1}-1,\ldots,sin^2_{x_m}+cos^2_{x_m}-1\rangle$ is the ideal. We denote this \textbf{constructive stability morphism} by the same letter $\chi$ and note its $C$--linearity. It is defined by
its values on the generators of the source algebra and is given by 
$\chi(T_t) = g$  and $\chi(T_{j_s}) = e^{i l_{s} \pi \tri x_{s}} = $
$\cos{\beta_s} + i \cdot \sin{\beta_s}$ for all $1\leq s \leq m$.\\

The constructiveness of this approach and hence, its applicability in computer algebra systems, lies in the following. We choose the basic numeric field to be exact complex-rational numbers 
$\Q[i]/\langle i^2+1 \rangle$. On demand, we can do further algebraic extensions of this field.
Moreover, we avoid complex exponentials by passing to the sine and cosine, incorporating their natural algebraic relations in the factor ideal. Then, a stability morphism can be implemented and used 
in every computer algebra system, being able to compute Gr\"obner bases.\\

Now, let $P = \sum_{a,\alpha} c_{a,\alpha} T^a_t T^{\alpha}_{x}$ be the operator form of
the finite difference scheme $P \bullet u = 0$, where $T^{\alpha}_x$ stands for 
$T^{\alpha_1}_{x_1} \cdot T^{\alpha_m}_{x_m}$ for a multi-index $\alpha=(\alpha_1,\ldots,\alpha_m) \in \N^m$. Then, 
\[
\chi(P) = \sum_{a,\alpha} c_{a,\alpha} \chi(T_t)^a \chi(T_{x})^{\alpha} = \sum_{a,\alpha} c_{a,\alpha} \prod^m_{k=1}(\cos{\beta_k} + i \cdot \sin{\beta_k})^{\alpha_k} g^a = \sum_a d_a g^a
\]

is the univariate polynomial in $g$, which we call the \textbf{stability polynomial} of a given difference scheme. Obviously, the degree of $\chi(P)$ is the same as the highest degree of $T_t$ in $P$. \\

\begin{example}
\label{HeatST}

Let us continue with the example \ref{Heat}. In order to prepare the scheme for stability analysis, one can rewrite it as 

\[
u^{n+1}_{j+1}-u^{n}_{j+1} = a^2 d \bigl ( \theta\cdot (u^{n+1}_{j+2} -2 u^{n+1}_{j+1}+u^{n+1}_{j}) + (1-\theta)\cdot(u^{n}_{j+2}-2 u^{n}_{j+1}+u^{n}_{j}) \bigr),
\]

with $d:= \dfrac{\tri t}{\tri x^2}$. 
We prefer to work with the semi-factorized operator form of the scheme $\tri x^2 T_x (T_t-1)  - a^2 \tri t (T_x-1)^2 (\theta T_t+ 1-\theta) = 0$

Creating the stability ring and performing simplification and
factorization in it (see next example for the \textsc{Singular} code),
we obtain the following linear polynomial in the variable $g$.

\[
(i \cos+ \sin) \cdot (((-2a^2 d \theta) \sin + 2 a^2 d \theta+1)\cdot g+
(2 a^2 d \theta-2 a^2 d) \sin- 2 a^2 d \theta+ 2 a^2 d-1)
\]

The first factor $i\cdot \cos(\beta)+\sin(\beta) = e^{i \cdot \beta}$ is
ignored in stability analysis, since it is of magnitude 1.

\end{example}

\begin{example}
\label{HeatSTSing}

We continue with the Example \ref{HeatSing}. Define the
semi-factorized scheme again. 
\begin{verbatim}
poly P  = Tx*(Tt-1)  +  (-a^2)*d*(Tx-1)^2*((theta)*Tt+(-theta+1));
ring r2 = (0,a,theta,d),(Tx,Tt),(c,Dp);
poly P  = imap(r,P);
\end{verbatim}

Now, we create the stability ring \texttt{ST} (which will be
$\Q(a,\theta,d)[g,i,sin,cos]$) and a map $\chi$ from \texttt{r2} (it is $\Q(a,\theta,d)[T_x,T_t]$ to it: 

\begin{verbatim}
ring ST     = (0,a,d,theta),(g,i,sin,cos),lp;
ideal Rels  = std(ideal(i2+1,sin^2+cos^2-1)); // the ideal of relations
map chi     = r2,ideal(sin+i*cos,g);
poly P = chi(P); // the mapping
P = NF(P,Rels); P;  // reduction modulo the relations in Rel

==>(-2*a^2*d*theta)*g*i*sin*cos+(2*a^2*d*theta+1)*g*i*cos+ ...
ideal FP = factorize(P); // factorization
\end{verbatim}



The polynomial together with its factorization have been presented in the previous example.

\end{example}

So, we came from a system of linear equations to a single univariate
polynomial in the stability ring. The next problem we face is the following: \\
\begin{quote}
{\em
Given a univariate parametric polynomial $P$, find out, under which conditions on parameters all the roots of $P$ lie in the complex unit circle.
}
\end{quote}

As it have been already mentioned in \cite{HLS, FIDE}, this problem can be solved algorithmically with the help of the CAD (Cylindrical Algebraic Decomposition).

\subsection{Cylindrical Algebraic Decomposition}

The algorithm for CAD goes back to G.~Collins et al. It is one of the most important algorithms, used for quantifier elimination not only in real algebraic geometry \cite{BPR}. On the other hand, its algorithmic complexity is high and can be double exponential in the number of variables. Nevertheless, the universality of the method makes it very powerful and applicable to various problems.\\

A finite set of polynomials $\{p_1,\ldots,p_m\} \in \R[x_1,\ldots,x_n]$ induces a
decomposition (partition) of $\R^n$ into maximal sign--invariant cells. 
A \textbf{cell} in the algebraic decomposition of $\{p_1,\ldots,p_m\} \in \R[x_1,\ldots,x_n]$ 
is a maximal connected subset of $\R^n$ on which all the $p_i$ are sign invariant.

\begin{definition}
For $n\in \N$, let $\pi_n : \R^n \to \R^{n-1}$, $(x_1,\ldots, x_{n-1}, x_n) \mapsto (x_1,\ldots,x_{n-1})$ denote the canonical projection.
Let $\{p_1,\ldots,p_m\} \in \Q[x_1,\ldots,x_n]$.
The algebraic decomposition of $\{p_1,\ldots,p_m\}$ is called \textbf{cylindrical}, if
\begin{itemize}
\item For any two cells $C,D$ of the decomposition, the
images $\pi(C),\pi(D)$ are either identical or disjoint.
\item The algebraic decomposition of $\{p_1,\ldots,p_m\} \cap \Q[x_1,\ldots,x_{n-1}]$
is cylindrical.
\end{itemize}
\end{definition}

For instance, any algebraic decomposition of $\R^1$ is cylindrical.\\

There are several sophisticated implementations of the CAD algorithm. We are
using the one from the system \textsc{Mathematica}, where two commands,
\texttt{CylindricalDecomposition} and \texttt{Reduce} are available in the context of CAD. There are also freely available systems \textsc{QEPCAD} by C.~Brown \cite{QEPCAD} and \textsc{REDLOG} by A.~Dolzmann et al. \cite{REDLOG}.

\subsection{CAD and Stability}

\begin{example}
\label{HeatST2}

Let us continue with the examples \ref{Heat}, \ref{HeatST}. 
Let us represent the root of a stability polynomial as $\tfrac{c}{d'}$,
where $c= 2 a^2 d (1-\theta) \sin - 2 a^2 d (1 - \theta)+1$, $d'= 2 a^2 d \theta(1-\sin)+1$

Since $d' \geq 0$, we have to solve the inequality $-d \leq c \leq d$, that is $c+d \geq 0$ and $d\geq c$.
The first inequality $2 a^2 d (1-\sin) \geq 0$ is always satisfied, and the second  is equivalent to $a^2 d (2\theta-1)(1-\sin)+1 \geq 0$.
In this example, we compare the functions \texttt{CylindricalDecomposition} and \texttt{Reduce} of \textsc{Mathematica}

\begin{verbatim}
CylindricalDecomposition[{a^2*d*(2*theta-1)*(1-s) + 1 >= 0, 
-1 <= s <= 1, a > 0, d > 0}, {theta, a, d, s}]
\end{verbatim}

returns 
\begin{footnotesize}
\[
\bigl (\theta < \dfrac{1}{2} \: \&\& \: a > 0 \: \&\& \: (0 < d \leq - \dfrac{1}{-2a^2 + 4a^{2} \theta} ) \:\&\&\: -1 \leq s \leq 1 \bigr) \:\: || 
\]
\[
\bigl(d > -\dfrac{1}{-2 a^2 + 4 a^2 \theta} \;\:\&\&\:\; \dfrac{1 - a^2 d + 2 a^2 d \theta}{-a^2 d + 2 a^2 d \theta} \leq s \leq 1\bigr)  \:\: || 
\: \bigl(\theta> \dfrac{1}{2} \; \&\& \;  a > 0 \;\&\&\; d > 0 \;\&\&\;  -1\leq s \leq 1 \bigr).
\]
\end{footnotesize}
Whereas executing more specialized 
\begin{verbatim}
Reduce[a > 0 && d > 0 && 0 <= theta <= 1 && 
ForAll[s, -1 <= s <= 1, a^2*d*(2*theta - 1)*(1-s) + 1 >= 0], {theta, d}]
\end{verbatim}

gives us more informative and structured answer
\[
a > 0 \;\&\&\; ( 0 \leq \theta < \dfrac{1}{2} \;\&\&\; 0 < d \leq - \dfrac{1}{-2a^2 + 4a^{2} \theta} ) \;||\; (\dfrac{1}{2} \leq \theta \leq 1 \;\&\&\; d > 0),
\]

from which we conclude, that
\begin{itemize}
\item if $\frac{1}{2} \leq \theta \leq 1$, the scheme is unconditionally stable
\item if $0\leq \theta < \dfrac{1}{2}$, the scheme is stable under the condition 
$d=\frac{\tri t}{\tri x^2} \leq \dfrac{1}{2a^{2} (1-2\theta)}$.
\end{itemize}

The quantity $d=\frac{\tri t}{\tri x^2}$ if often called Courant (or Courant-Friedrichs-Lewy) number. It is classical to express conditions on the von Neumann stability in terms of the Courant number.
\end{example}

\begin{example}
Consider the 1D \textbf{advection equation} $u_t + a u_x = 0$. We approximate
$u_t$ with the parametric temporal method and $u_x$ with the trapezoid rule.
As a result, we obtain the difference scheme in the semi-factorized form
$\tri x \cdot (T_x+1)\cdot (T_t-1) + 2a \tri t \cdot (T_x-1)\cdot (\theta T_t-(\theta-1)) = 0$, which reads as follows in the nodal form:
\[
\frac{1}{2a\tri t}\cdot (u^{n+1}_{j+1}-u^{n}_{j+1}+u^{n+1}_{j}-u^{n}_{j})+ 
\frac{1}{\tri x}\cdot (\theta (u^{n+1}_{j+1}-u^{n+1}_{j}) 
- (\theta-1) (u^{n}_{j+1} - u^{n}_{j})) = 0.
\]
As one can easily see, this scheme is consistent with its differential
equation. The stability polynomial is linear with complex coefficients, 
so we present it as a fraction. 
The reformulated stability problem, which we have to solve, is
\[
-2 \leq 
\frac{4a^2d^2 ( 2 \theta - 1)}{4a^2d^2 (\theta-1)^2 + \frac{1+\sin(\beta)}{1-\sin(\beta)} }
\leq 0, \; \forall \beta \not\in \frac{\pi}{2} \Z
\]
Since $t:= \tfrac{1+\sin(\beta)}{1-\sin(\beta)} \geq 0$, the right hand side inequality is equivalent to $\theta\leq \tfrac{1}{2}$. The left hand side is equivalent to $4a^2d^2 ( 2 \theta - 1) +2 (4a^2d^2 (\theta-1)^2 + t)) \geq 0$. Since $t\in [0,\infty)$, we have to show that 
$0 \leq 4a^2d^2 ( 2 \theta - 1) + 8a^2d^2 (\theta-1)^2 = 4a^2d^2(\theta^2 + (\theta-1)^2)$, what is true for all $d$. Of course, computations with CAD confirm this answer.

Thus, this scheme is unconditionally stable if $\theta \leq \tfrac{1}{2}$ and unstable otherwise.

\end{example}


\section{Examples for $\lambda$-wave equation}
\label{sExWave}

We consider a parametric equation $u_{tt} - \lambda^2 u_{xx} = 0$
with parameter $\lambda \not=0$ and its higher dimensional versions. We construct finite difference schemes for several different approximations and analyze them for stability.

\subsection{Conservative law with parametric time approximation} 

The presentation via the conservation law is
$\oint\limits_{\Gamma} \lambda^2 u_x dt + u_t dx = 0$.
We use trapezoid rule for both for the contour integral and spatial integral relations. For temporal integral relations, we employ the parametric difference with $\theta \in [0,1]$. 
We obtain the following system of difference equations:

\[
\begin{pmatrix}
\tri h \cdot (- T_x T_t+T_x +T_t -1) & \lambda^2 \tri t \cdot (T_x T_t - T_t - T_x +1) & 0 \\
0 & \frac{1}{2} \tri x \cdot (T_x+1) &  1-T_x \\
\tri t \cdot(\theta T_t +(1-\theta)) & 0  & 1- T_t
\end{pmatrix}
\bullet
\begin{pmatrix}
u_t\\
u_x\\
u
\end{pmatrix}
=0
\]

After the computation of Gr\"obner basis, we obtain the scheme

\begin{eqnarray*}
0 = \frac{1}{2 \tri t ^{2}}\cdot (u^{n+2}_{j}-2 u^{n+1}_{j}+u^{n}_{j}) - 
\frac{1}{2 \tri t ^{2}}\cdot (u^{n+2}_{j+2} -2 u^{n+1}_{j+2}+u^{n}_{j+2})  + \\
+ \frac{ \lambda ^{2}}{ \tri h ^{2}}\cdot
\Bigl(  \theta \cdot (u^{n+2}_{j+2}-2 u^{n+2}_{j+1}+u^{n+2}_{j})
-(2 \theta -1)\cdot (u^{n+1}_{j+2} - 2 u^{n+1}_{j+1} + u^{n+1}_{j})  + 
(\theta -1)\cdot (u^{n}_{j+2}-2  u^{n}_{j+1}+u^{n}_{j}) \Bigr)
\end{eqnarray*}


The stability polynomial of 2nd degree is rather complicated. However, 
factorization reveals a factor $g-1$. The other
factor is linear, but with complicated coefficients. We present
it as $g - \tfrac{c}{d'}$. Since both $c$ and $d'$ are complex
numbers, we compute the absolute value of them. Then, 
$||d'|| = (4\theta^2 d^4-1)\cdot(\cos(\beta_x)-1) -2$ and
$||c|| = ||d'|| - 4d^4(2\theta-1)(\cos(\beta_x)-1)$. 
\[
\text{Then, } 
||\frac{c}{d'}|| \leq 1 \Leftrightarrow 0 \leq 4 d^4 \sin(\beta_x/2)^2 \dfrac{(2\theta-1)}{(4\theta^2 d^4-1)\sin(\beta_x/2)^2 +1} \leq 1.
\]

Consider the left hand side inequality
\[
0 \leq 4 d^4 \sin(\beta_x/2)^2 \dfrac{(2\theta-1)}{(4\theta^2 d^4-1)\sin(\beta_x/2)^2 +1} \Leftrightarrow 0 \leq (2\theta-1)((4\theta^2 d^4-1)\sin(\beta_x/2)^2 +1)
\]

Since $4\theta^2 d^4 > 0 \Leftrightarrow 4\theta^2 d^4 -1 > -1 \geq -\dfrac{1}{\sin(\beta_x/2)^2}$, the second factor is always positive. Hence, the inequality is satisfied as soon as $\theta \geq \frac{1}{2}$.

The second inequality reads as
$4 d^4 \sin(\beta_x/2)^2 \dfrac{(2\theta-1)}{(4\theta^2 d^4-1)\sin(\beta_x/2)^2 +1} \leq 1$. Then,
\[
4\theta^2 d^4 \sin(\beta_x/2)^2 + 1 -\sin(\beta_x/2)^2 \geq (4 d^4 \sin(\beta_x/2)^2)(2\theta-1) \Leftrightarrow
\]

\[
\theta^2 + \frac{\cos(\beta_x/2)^2}{4 d^4 \sin(\beta_x/2)^2} \geq (2\theta-1) \Leftrightarrow (\theta-1)^2  + \frac{\cos(\beta_x/2)^2}{4 d^4 \sin(\beta_x/2)^2} \geq 0,
\]

what is always the case. Summarizing, we obtain that this scheme is unconditionally stable, if $\theta \geq \frac{1}{2}$ and unstable otherwise.

\subsection{Integral relations and 2nd order central approximations}
\label{goodWave}

Using direct 2nd order central approximations for both $t$ and $x$, we obtain
the following scheme:

\[
(u^{n+2}_{j+1}-2 u^{n+1}_{j+1}+u^{n}_{j+1}) - \lambda^{2} \frac{\tri t^{2}}{ \tri h^{2}}\cdot (u^{n+1}_{j+2}-2 u^{n+1}_{j+1}+u^{n+1}_{j}) = 0
\]

We denote $d:= \lambda \dfrac{\tri t}{\tri h}$, then the polynomial, describing the scheme is
\[
p = d^2 T_x^2 T_t-T_x T_t^2 +(-2 d^2+2)T_x T_t-Tx+d^2 T_t
= T_x (T_t-1)^2 - d^2 (T_x- 1)^2 T_t,
\] 
with the second expression being 
the convenient semi-factorized form

As usual, we use the stability morphism and simplifications. After them, the stability polynomial reads then as $g^2+ (-2 + 4d^2 \sin^2(a/2))g +1 = 0$.
Denote $b:=-1+2d^2 \sin^2(a/2)$, i.e. 
$g^2 + 2b g +1 =0$ and the roots are $b\pm\sqrt{b^2-1}$. 
If $b^2>1$, then one of the roots has modulus bigger, than one.
If $b^2=1$, the roots are $\pm 1$.
If $b^2<1$, the absolute value of both roots equals $b^2+1-b^2=1$.
Hence, $b^2\leq 1$, what is satisfied if and only if $d\leq 1$ that is
$\dfrac{\tri t}{\tri h} \leq \dfrac{1}{\lambda}$. The same condition is produced with the help of CAD in \textsc{Mathematica}.

This scheme is conditionally stable with the condition for the Courant number $d=\lambda \dfrac{\tri t}{\tri h}\leq 1$.

\subsection{Explicit integration for $t$ and trapezoid rule for $x$}

Using the explicit integration (that is, a backward difference) for $t$ and 
trapezoid rule for $x$, we get the following scheme.

\begin{eqnarray*}
\frac{1}{4 \tri t^{2}}\cdot (u^{n+2}_{j+2}-2 u^{n+1}_{j+2}+u^{n}_{j+2}
+ 2(u^{n+2}_{j+1}-2 u^{n+1}_{j+1}+ u^{n}_{j+1}) + u^{n+2}_{j}-2 u^{n+1}_{j}+u^{n}_{j}) - \\
- \frac{ \lambda^{2}}{ \tri h^{2}}\cdot (u^{n+2}_{j+2}-2 u^{n+2}_{j+1}+u^{n+2}_{j}) = 0
\end{eqnarray*}


The difference scheme polynomial is
\[
T_x^{2} T_t^{2}-2 T_x^{2} T_t+2 T_x T_t^{2}+ T_x^{2}-4 T_x T_t+ T_t^{2}+2 T_x-2 T_t+1 - \dfrac{4 \lambda^{2} \tri t^2}{ \tri h^{2}} (T_x^{2} T_t^{2}-2 T_x T_t^{2}+ T_t^{2})
\]

Denote $d^2:= \dfrac{4 \lambda^{2} \tri t^2}{ \tri h^{2}}$. After performing
substitutions, we obtain $g^2 - 2bg +b =0$, where $b = (1+d^2\tan^2(a))^{-1}$.
Its solutions are straightforward:  $g = b \pm \sqrt{b^2-b}$.
If $b^2-b>0$, we have $b>1$ and hence one root is too big.
If $b^2-b\leq 0$, the absolute value of a root is just $b^2+b-b^2=b$, what
is not bigger than $1$. And of course, $b \leq 1$ is satisfied, since
$b= 1+d^2\tan^2(a))^{-1}$. Hence, this scheme is unconditionally stable.
With the help of CAD and \textsc{Mathematica}, we arrive
to the same conclusion.

\subsection{Higher dimensional $\lambda$-wave equation}

One of the crucial advantages of our approach and its implementation is the scalability. That is, we employ the algorithms in the very general setting. They can be easily modified for the case of more functions (like $u$) involved. In particular, we are able to generate schemes and test them for stability in the higher-dimensional setting.

Consider the approach from Subsection \ref{goodWave} which led us to the
conditionally stable scheme. In what follows, we apply the same approximations
to all the spatial variables.\\

\textbf{Two spatial dimensions.} We have $u_{tt} - \lambda^2 (u_{xx} + u_{yy}) = 0$. The scheme, which we obtain is

\begin{eqnarray*}
0 = \frac{1}{ \tri t^{2}}\cdot (u^{n+2}_{j+1,k+1}-2 u^{n+1}_{j+1,k+1}+u^{n}_{j+1,k+1})- \\
- \frac{ \lambda^{2}}{ \tri x^{2}}\cdot  (u^{n+1}_{j+2,k+1} -2 u^{n+1}_{j+1,k+1} + u^{n+1}_{j,k+1}) - \frac{ \lambda^{2}}{ \tri y^{2}}\cdot (u^{n+1}_{j+1,k+2} -2u^{n+1}_{j+1,k+1}+u^{n+1}_{j+1,k} ) 
\end{eqnarray*}

In a semi-factorized form, the scheme looks as follows
\[
T_x T_y (T_t-1)^2 - d_x^2 \cdot (T_x-1)^2 T_y T_t  - d_y^2 \cdot T_x (T_y-1)^2 T_t=0.
\]

The stability polynomial in a simplified form is
\[
g^2 -2(d_x^2 \cos(\beta_x) + d_y^2 \cos(\beta_y) - d_x^2-d_y^2-2) \cdot g +1 = 0.
\]

Using CAD, we conclude, that this scheme is \textbf{conditionally stable} with the condition $d_x^2 + d_y^2 \leq 1$ for the Courant numbers $d_x:=\lambda \dfrac{\tri t}{\tri x}$, $d_y:=\lambda \dfrac{\tri t}{\tri y}$. \\

\textbf{Three spatial dimensions.} The corresponding equation is $u_{tt} - \lambda^2 (u_{xx} + u_{yy} + u_{zz}) = 0$.\\ 
The difference scheme is analogous to the two--dimensional one, in a semi-factorized form it has the following form (from which one easily deduces, how the scheme looks in yet higher dimensions):
\[
T_x T_y T_z (T_t-1)^2 - d_x^2 \cdot (T_x-1)^2 T_y T_z T_t - d_y^2 \cdot T_x (T_y-1)^2 T_z T_t - d_z^2 \cdot T_x T_y (T_z-1)^2 T_t = 0.
\]

Running CAD, we obtain, that this scheme, as its lower-dimensional analogues, is 
\textbf{conditionally stable} with the condition $d_x^2 + d_y^2 + d_z^2 \leq 1$ for the Courant numbers $d_x$, $d_y$ and $d_z:=\lambda \dfrac{\tri t}{\tri z}$.

\section{Dispersion Analysis}
\label{sDispersion}
\subsection{Continuous Dispersion}
Recall, that a \textbf{Fourier node} 
in n+1 dimensions is the function of the form
\[
e^{i( \langle k,x \rangle  - \w t)}, \;\; \langle k,x \rangle := \sum^n_{j=1} k_j x_j
\]

Respectively, in  1+1 dimensions it is just $e^{i(kx-\w t)}$. 
One obtains continuous dispersion from the given linear PDE by substituting Fourier nodes into the equation and deriving a relation $\w = \w (k)$ from the result. 
\begin{example}
For the equation $u_{tt} -  \lambda^2 u_{xx}=0$ we have
\[
0 = (\dfrac{\d}{\d t^2} - \lambda^2 \dfrac{\d}{\d x^2})e^{i(kx- \w t)} = 
-e^{i(kx-\w t)}\cdot(\w^2 - \lambda^2 k^2)
\]
Hence, $\w = \pm \lambda k$ is the continuous dispersion relation for
the $\lambda$--wave equation.
\end{example}

We can write down the action of partial derivatives on a Fourier mode.
Namely, 
\[
\dfrac{\d^a}{\d t^a} (e^{ i(\langle k,x \rangle - \w t)}) = (-i \w)^a e^{ i(\langle k,x \rangle - \w t)}  \text{ and }
\dfrac{\d^{b_j}}{\d x_j^{b_j}} (e^{ i(\langle k,x \rangle - \w t)}) = (i k_j)^{b_j} e^{ i(\langle k,x \rangle - \w t)}.
\]

Hence, the monomial in partial differentiations has its eigenvalue
\[
\dfrac{\d^a}{\d t^a} \prod^{n}_{j=1} \dfrac{\d^{b_j}}{\d x_j^{b_j}}  (e^{ i(\langle k,x \rangle - \w t)}) = (-i \w )^a \prod^{n}_{j=1} (i k_j)^{b_j}  \cdot  (e^{ i(\langle k,x \rangle - \w t)})
\]

Let us denote $F = e^{ i\langle k,x \rangle - i \w t)}$. Then $\partial^{\alpha} (F) = c(\alpha) \cdot F$, where $\alpha:=(a,b_1,\ldots,b_n) \in \N^{n+1}$. Extending this action by linearity to the ring of partial differentiations with constant coefficients $R=\K[\d_t,\d_{x_1},\ldots,\d_{x_n}]$, we are able to compute the eigenvalue of any polynomial from $R$ on $F$:
\[
P(F) = \sum_{\alpha} p_{\alpha} \partial^{\alpha} (F) = \bigl (\sum_{\alpha} p_{\alpha} c(\alpha) \bigr ) \cdot F.
\]

Then, the continuous dispersion relation is obtained by solving with respect to $\w$ the equation 
\[
\sum_{\alpha} p_{\alpha} c(\alpha) = 0, \;  p_{\alpha}\in \K, \;  c_{\alpha}\in \K(k_1,\ldots,k_n,\w ),
\]
which is called the \textbf{continuous dispersion equation} (CDE) for $P$.

\begin{example}
For $1+n$-dimensional heat equation $u_t - a^2 \cdot \sum^{n}_{j=1} u_{x_j x_j} = 0$
the continuous dispersion equation is 
\[
0 = - i\w -a^2  \sum^{n}_{j=1} i^2 k_j^2 \; \Longleftrightarrow \; \w = -i a^2 \sum^{n}_{j=1} k_j^2.
\]
\end{example}

\begin{example}
For $1+n$-dimensional modified $\lambda_i$-wave equation $u_{tt} - \sum^{n}_{j=1} \lambda_j^2 \cdot u_{x_j x_j} = 0$ 
the continuous dispersion equation is 
$w = \pm \sqrt{\sum_{j=1}^n \lambda_j^2 k^2_j} $.
\end{example}

\subsection{Discrete Dispersion}

In the discrete case, we consider a discrete Fourier node, 
corresponding to the grid point $(t_m,(x_{1})_{l_1}, \ldots, (x_{n})_{l_n})$,
\[
F^m_l = e^{i \langle k,x_{(l)} \rangle  - \w t_m}, \;\; \langle k,x_{(l)} \rangle := \sum^n_{j=1} k_j (x_{j})_{l_j}.
\]

One substitutes a discrete Fourier node into the difference scheme and derives
a relation $\w = \w (k)$ from the result. Let us write down the formula for the eigenvalue of a monomial:
\[
T_t^a \prod^{n}_{j=1} T_{x_j}^{b_j} (e^{i \langle k,x_{(l)} \rangle  - \w t_m}) = 
(e^{-i \w \tri t})^a \prod^{n}_{j=1} (e^{i k_j \tri x_j})^{b_j} \cdot (e^{i \langle k,x_{(l)} \rangle  - \w t_m}).
\]

As in the continuous case, we extend this action by linearity to polynomials. For a polynomial $P\in\K[T_t,T_{x_1},\ldots,T_{x_n}]$ one has
\[
P(F^m_l) = \sum_{\alpha} p_{\alpha} T^{\alpha} (F^m_l) = \bigl (\sum_{\alpha} p_{\alpha} c(\alpha) \bigr ) \cdot F^m_l,
\]
so we solve the \textbf{discrete dispersion equation} (DDE) for $P$,
\[
\sum_{\alpha} p_{\alpha} c(\alpha) = 0, \;  p_{\alpha}\in \K, \;  c_{\alpha}\in \K(\{k_j\},\w ),
\]
and obtain the discrete dispersion relation. Note, that in contrast to the continuous case, this relation is not of polynomial form in general.

Presenting discrete Fourier nodes via trigonometrical functions, we are able to compute discrete dispersion relations symbolically. We prefer not to use the 
de Moivre's formula, but to express dispersion relations in terms of sine and cosine of a single argument.

We work in the commutative ring $\C(\tri t, \tri x)[sin_t, cos_t, \{sin_{j}, cos_{j}\}]$ modulo the ideal $\langle \{ sin^2_{j}+cos^2_{j}-1 \}, sin^2_t+cos^2_t-1 \rangle$, where 
$cos_{j} :=  \cos(k \tri x_j)$, $cos_t := \cos(\w \tri t)$. Then,
\[
T_t^a \prod^{n}_{j=1} T_{x_j}^{b_j} (F^m_l) = 
(\cos_t - i \sin_t)^a \prod^{n}_{j=1} (\cos_j + i \sin_j)^{b_j} \cdot (F^m_l).
\]

\begin{example}
Consider the  $\lambda$-wave equation $u_{tt} -  \lambda^2 u_{xx}=0$
and the difference scheme 
\[
d^2 T_x^2 T_t-T_x T_t^2 +(-2 d^2+2)T_x T_t-T_x+d^2 T_t = 0,
\]
obtained with the $2^{nd}$ order central approximations for $x$ and $t$, we denote $d:= \lambda \frac{\tri t}{\tri x}$.

Performing computations, we obtain after simplification $d^2 \cos_x - cos_t+1-d^2 = 0$, that is $\cos(\w \tri t) = 1 - d^2(1-\cos(k \tri x))$. 
In the stability limit $d\to 1$, we have $\cos(\w \tri t) = \cos(k \tri x)$,
hence $w = \pm \frac{\tri x}{\tri t} k + 2 \pi m, m\in \Z$.
Since $d \to 1$ implies $\frac{\tri x}{\tri t} \to \lambda$, 
in the stability limit the discrete dispersion relation becomes
$\w = \pm \lambda k + 2\pi m$, where for $m=0$ we recover the continuous dispersion relation.
\end{example}

\section{Conclusion and Future Work}

The advantages of the methods we propose include, among other, 
their scalability and tendency towards automatization. Indeed,
we do not make distinction between classical types of PDE's
(hyperbolic, elliptic, parabolic). Thus these methods are
very general. Symbolic methods are able to generate automatically 
many difference schemes of standard linear PDE's, as
it was demonstrated in \cite{BGM} and by ourselves.

An important issue for the future research is
the (partial) algebraization of the consistency of a generated scheme
with the differential equation. Provided such a check, one could
work with general multi-parametric schemes, where the conditions
on parameters arise from the consistency check and the symbolic
stability approach.

We decided not to include the treatment of systems of linear PDE's in
this paper. However, we want to remark, that by the rewriting system
approach the number of the discretized equations is exactly the number
of PDE's one started with. By using Gr\"obner bases, we get in general
more equations, which, in turn, reveal the interplay between discretized
equations. Such interplay is not detected by the rewriting approach at
all; it seems to us that numerists just ignored such interplay. Hence,
this issue need to be investigated further.

Christian Dingler (TU Kaiserslautern, Germany) has been developing a new package
for \textsc{Singular} with \textsc{QEPCAD} as an engine for cylindrical
algebraic decomposition. This package extends the tools for the generation
of finite difference schemes to the cases of a single linear PDE and
of a system of linear PDE's. Another problem for the further research is
the generalization of von Neumann stability for systems, which is
clear only for some classes of equations.

The generalization of the methods we described goes in several different
directions: allowing variables coefficients and/or allowing nonlinearity
of expressions in functions. The generation of schemes in such settings
is still possible (by the rewriting system approach), but 
even the notion of stability, to the best of our knowledge, needs to be investigated
depending on the particular class of equations.

Very important question is the role of differential and difference Gr\"obner bases 
for nonlinear equations in the scheme generation and analysis. The recent works
\cite{BGM, BG09} show, that in some cases the systematic use of interplay
between equations can produce more unviversal (though, of course, more complicated)
schemes, which hints at big potential of these methods.

In connection with this, a new theory for infinite (but still constructively
approachable!) difference Gr\"obner bases, arising from the \textit{letterplace}
philosophy \cite{LL09}, is of great interest.

\section*{Acknowledgments}

The authors express their gratitude to V.~P.~Gerdt (JINR, Russia) and M.~Fr\"ohner (TU Cottbus, Germany) for their interest, discussions and suggestions during the work on this paper. 
We would also like to thank M.~Kauers (RISC, Linz, Austria), W.~Zulehner (J.~Kepler University of Linz, Austria) and A.~Klar (TU Kaiserslautern, Germany) for discussions on various topics around stability in this paper. We have learned many examples from the scripts and papers of colleagues, mentioned above. A special thanks goes to H.~Engl (Vienna, Austria) for his constructive critics, which helped to improve the presentation of the results.
 
The first author is grateful to the SFB F013 ``Numerical and Symbolic Scientific Computing'' of the Austrian FWF for partial financial support in 2005-2007.

\section*{Appendix. Tools for finite difference schemes in \textsc{Singular}}



\section{An introduciton to the system {\sc Singular} and its language}
\label{SingLang}

Next we describe shortly by examples how to read {\sc Singular} language and how 
to obtain and interpret the output \-- as far as it is used
to generate a difference scheme. More details are found in the {\sc Singular} manual at \texttt{www.singular.uni-kl.de}. 

\begin{itemize}
\item Defining an algebra. At first a domain for the computation has to be defined.
This is done by the following input:
\begin{quote}
\begin{verbatim}
ring R = (0,ro,K,dt,dh),(Tx,Tt),(c,dp);
\end{verbatim}
\end{quote}
It defines a polynomial ring, denoted by $R$, which represents an algebra over 
a prime field. The first brackets defines the coefficient field. It starts with a 
number for the characteristic, here $0$ stands for the prime field of 
rational numbers,
which is always used in our considerations. A list of names for parameters follows.
These parameters are constants \-- as step size in space or time ($dh, dt$) 
or constants of the equations ($ro, K$). Hence we have defined $\K$ 
as field of coefficients, here: rational functions over $\Q$ 
in the parameter $ro,K,dt,dh$.
The list of the second brackets corresponds to names of the variables $Tx,Tt$, 
which stand for commuting polynomial variables over the coefficient field $\K$ 
(we restrict us for this description to the commutative case only).
The names are almost arbitrary strings, hence names can be used to indicate 
the meaning \-- here the shift operators.
The string in the third brackets explains the monomial order used for
Gr\"obner basis computations: $dp$ \-- degree reverse lexicographical ordering
for polynomials. A small c at first place 
sort polynomial vectors by components first in descending order, i.e., 
${\bf e_1} > {\bf e_2}>\ldots$, then by the monomial ordering. 
\item Evaluating the constants:\\
Wanting to evaluate the constants in a resulting expression you have to
create a new ring, a mapping and substitutions:
\begin{quote}
\begin{verbatim}
ring r   = (0,dt,dh),(Tx,Tt,ro,K),(c,dp);
def  Ob1 = imap(R,Ob);
number num_ro = dh/2; number num_K = 4*dt^2 - dh^2; // constants
def  Ob2 = subst(Ob1,ro,num_ro);
     Ob2 = subst(Ob2,K,number_K);
\end{verbatim}
\end{quote}
Here  {\sl 'Ob'} is the name of the type 
(polynomial, matrix or ideal) you are considering. 
{\sl 'def Ob1'} creates an object of 
type of the right hand side. A new ring is introduced because a substitution
is defined only for variables.
\item Creating a matrix:
Starting with a linear system of PDE's, adding approximation rules, we end up 
with an extended system $\tilde{A}U=0$, see above.
We need only the matrix with entries in the ring $R$ of shift operators:
\begin{quote}\begin{verbatim}
matrix A[3][3] =
(-Tx*Tt^2+Tx), (Tx^2*Tt - Tt), 0 ,
0, (dh/2)*(Tx+1), 1-Tx,
(dt/2)*(Tt+1), 0, 1-Tt;
\end{verbatim}
\end{quote}
In the definition of a matrix you have to indicate row- and column-size, 
and on the right hand side just the list of polynomials.
\item Elimination of components:\\
We have to eliminate all components of the vector $U$ that stand 
for a differential operator, corresponding to Gauss-elimination over $R$
with the matrix. In this example the anonymous vector $U$ stands for
$(u_t,u_x,u)^t$. We want to produce with $A$ a row, having entries
only in the last component(s). This is done most 
efficiently by a Gr\"obner basis computation
of the module generated by the columns of the 
transposed matrix in the indicated monomial ordering.
The last non-zero component(s) of the first generator(s) 
correspond to the difference scheme, here:
\begin{quote}\begin{verbatim}
module M  = transpose(A);
module M1 = std(M); 
print(M1);
M1[3,1];
\end{verbatim}
\end{quote}
The command {\sl 'print'} returns only the first digits of the string corresponding 
to any entry. Type the indices of an entry in square brackets, to get 
it in full length. Here the resulting equation is returned as:
\begin{quote}\begin{verbatim}
(-dt)*Tx^2*Tt+(dh)*Tx*Tt^2+(2*dt-2*dh)*Tx*Tt+(dh)*Tx+(-dt)*Tt
\end{verbatim}
\end{quote}
\end{itemize}

\subsection{Tools for difference schemes}


The library \texttt{findifs.lib} has been created to automate numerous
processes, taking place while working with finite difference schemes.
An important role is played by sophisticated routines, transforming the
 different forms of objects into some classical ones. One can generate
complicated schemes and easily present them in, say, nodal form. At the
same time, one keeps the polynomial operator presentation, which is used
in e.g. stability analysis.

\begin{itemize}

\item  \texttt{decoef(P,n);}  P a poly, n a number.\\
Decomposes the poly P into summands with respect to the presence of a number n in the coefficients, returns an ideal in usually two generators.

Example:
\begin{verbatim}
  ring r = (0,dh,dt),(Tx,Tt),dp; 
  poly P = (4*dh^2-dt)*Tx^3*Tt + dt*dh*Tt^2 + dh*Tt; 
  P;
==> (4*dh^2-dt)*Tx^3*Tt+(dh*dt)*Tt^2+(dh)*Tt
  decoef(P,dt); 
==> _[1]=(4*dh^2)*Tx^3*Tt+(dh)*Tt   // the part, not containing dt
    _[2]=(-dt)*Tx^3*Tt+(dh*dt)*Tt^2 // the part which contains dt
  decoef(P,dh); 
==> _[1]=(-dt)*Tx^3*Tt               // the part, not containing dh
    _[2]=(4*dh^2)*Tx^3*Tt+(dh*dt)*Tt^2+(dh)*Tt
\end{verbatim}

\item \texttt{difpoly2texS,P[,Q])};  S an ideal, P and optional Q are lists.\\
Presents the difference scheme, given in the ideal S, in the nodal form.
The ideal S thought to be the result of \texttt{decoef}, list P contains
parameters, which will be controlled in order to remain in numerators. The optional
list Q contains polynomials, which will be added to the scheme (written in the function $u$) the part in terms of  a function $p$.

Example:
\begin{verbatim}
  ring r = (0,dh,dt,V),(Tx,Tt),dp; 
  poly M = (2*dh*Tx+dt)^2*(Tt-1) + V*Tt*Tx;
  M;
==> (4*dh^2)*Tx^2*Tt+(-4*dh^2)*Tx^2+(4*dh*dt+V)*Tx*Tt+(-4*dh*dt)*Tx+(dt^2)*Tt+(-dt^2)
  ideal I = decoef(M,dt);
  I;
==> I[1]=(4*dh^2)*Tx^2*Tt+(-4*dh^2)*Tx^2+(V)*Tx*Tt
    I[2]=(4*dh*dt)*Tx*Tt+(-4*dh*dt)*Tx+(dt^2)*Tt+(-dt^2)
  list L; L[1] = V;
  difpoly2tex(I,L);
==> \frac{1}{4 \tri t}\cdot (u^{n+1}_{j+2}-u^{n}_{j+2}+\frac{ \nu}{4 \tri h ^{2}} 
    u^{n+1}_{j+1})+ \frac{1}{4 \tri h}\cdot (u^{n+1}_{j+1}-u^{n}_{j+1}+
    \frac{ \tri t}{4 \tri h} u^{n+1}_{j}+\frac{- \tri t}{4 \tri h} u^{n}_{j})
\end{verbatim}

The last output, compiled with TeX, produces
\[
\frac{1}{4 \tri t}\cdot (u^{n+1}_{j+2}-u^{n}_{j+2}+\frac{ \nu}{4 \tri h ^{2}} u^{n+1}_{j+1})+ \frac{1}{4 \tri h}\cdot (u^{n+1}_{j+1}-u^{n}_{j+1}+\frac{ \tri t}{4 \tri h} u^{n+1}_{j}+\frac{- \tri t}{4 \tri h} u^{n}_{j}).
\]

Now let us illustrate the use of the optional list Q.
\begin{verbatim}
ring D = (0,ro,K,dt,dh),(Tx,Tt),(c,Dp);
poly U = (-K*dt)*Tx^2*Tt+(K*dt)*Tt;
poly P = (-2*ro*dh)*Tx*Tt+(2*ro*dh)*Tx;
list V; V[1] = K; V[2] = ro;
difpoly2tex(-U,V,-P);  // we want P to be in terms of p, and U in terms of u
==> \frac{K}{2 \tri h}\cdot (u^{n+1}_{j+2}-u^{n+1}_{j})+ \frac{ \rho}{ \tri t}
    \cdot (p^{n+1}_{j+1}-p^{n}_{j+1})
\end{verbatim}

That is, we have produced the nodal form of scheme for two functions $u$ and $p$:
\[
\frac{K}{2 \tri h}\cdot (u^{n+1}_{j+2}-u^{n+1}_{j})+ \frac{ \rho}{ \tri t}\cdot (p^{n+1}_{j+1}-p^{n}_{j+1}).
\]

\item \verb?exp2pt(P[,L]);? convert a polynomial M into the TeX format, in nodal form
\item \verb?mon2pt(P[,L]);? convert a monomial M into the TeX format, in nodal form
\item \verb?texcoef(n);?  converts the number n into TeX
\item \verb?npar(n);? search for 'n' among the parameters and returns its number
\item \verb?replace(s,what,with);? replaces in s all the substrings with a given string
\item \verb?xchange(w,a,b);? exchanges two substrings of a string 
\item \verb?par2tex(s);? converts special characters to TeX in s.
\end{itemize}

\nocite{F84}

\bibliographystyle{abbrv}
\bibliography{lit}
\end{document}